\newcommand{\R}{\mathbb{R}}
\newcommand{\doslineas}[2]{\genfrac{}{}{0pt}{}{#1}{#2}}
\newtheorem{thm}{Theorem}[section]
\newtheorem{lem}[thm]{Lemma}
\theoremstyle{definition}
\newtheorem{defn}[thm]{Definition}
\theoremstyle{remark}
\newtheorem{remark}[thm]{Remark}
\numberwithin{equation}{section}
\begin{document}

\title[Solving Kepler's equation via Smale's $\alpha$-theory]
      {Solving Kepler's equation via Smale's $\alpha$-theory}

\author[M.~Avendano]{Mart\'\i n Avendano}
\address{Centro Universitario de la Defensa\\
         Academia General Militar\\
         Ctra. de Huesca s/n\\
         50090, Zaragoza, Spain and IUMA, Universidad de Zaragoza, Spain}
\email{avendano@unizar.es, vmartin@unizar.es, jortigas@unizar.es}
\author[V.~Mart\'\i n-Molina]{Ver\'onica Mart\'\i n-Molina}
\author[J.~Ortigas-Galindo]{Jorge Ortigas-Galindo}

\thanks{Second author is partially supported by the MINECO grant MTM2011-22621 and the FQM-327 group (Junta de Andaluc\'ia, Spain). Third author is partially supported by the MINECO grant MTM2010-21740-C02-02. Both are also partially supported by the Grupo consolidado E15 ``Geometr\'ia'' (Gobierno de Arag\'on, Spain) and the ``Centro Universitario de la Defensa de Zaragoza'' grant ID2013-15.}
\keywords{Kepler's equation, Newton's method.}

\begin{abstract}
We obtain an approximate solution $\tilde{E}=\tilde{E}(e,M)$ of Kepler's equation
$E-e\sin(E)=M$ for any $e\in[0,1)$ and $M\in[0,\pi]$. Our solution is guaranteed,
via Smale's $\alpha$-theory, to converge to the actual solution $E$ through Newton's
method at quadratic speed, i.e. the $n$-th iteration produces a value $E_n$ such
that $|E_n-E|\leq (\frac12)^{2^n-1}|\tilde{E}-E|$. The formula provided for $\tilde{E}$
is a piecewise rational function with conditions defined by polynomial inequalities,
except for a small region near $e=1$ and $M=0$, where a single cubic root is used. We
also show that the root operation is unavoidable, by proving that no approximate solution
can be computed in the entire region $[0,1)\times[0,\pi]$ if only rational functions are
allowed in each branch.
\end{abstract}

\maketitle

\section{Introduction}\label{sec-intro}

Kepler's laws describe the way planets move in their orbits about the Sun.
Geometrically, they say that the planets move in planar elliptical orbits with
eccentricity $e\in[0,1)$, and that the area swept by the line joining the planet
and the Sun increases linearly with time, which leads immediately to Kepler's
equation $E-e\sin(E)=M$, relating mean and eccentric anomalies: the mean anomaly
is a fictitious angle $M$ that increases linearly with time at a rate $M=2\pi t/T$,
where $T$ is the orbital period, and the eccentric anomaly $E$ gives the coordinates
of the planet in its orbit plane as $(x,y)=(a\cos(E), b\sin(E))$. Here, the
$xy$-plane has origin at the center of the ellipse with the $x$-axis pointing to
the perihelion, and the values $a$ and $b$ are the semi-major and semi-minor axis
of the ellipse. Therefore, finding the exact location of a planet at a given time
requires solving an instance of Kepler's equation for some $M$, assuming that the
values $a$, $b$, $e$ and $T$ are known (actually, only $a$ and $e$ are needed,
since $b=a\sqrt{1-e^2}$ and $T$ can be obtained from $a$ using the third law).
For a derivation of these formulas, and a detailed introduction to Kepler's
equation, see~\cite{Batt}.

By a symmetry argument, the equation can be easily reduced to the case $M\in[0,\pi]$.
The existence and uniqueness of solution $E\in[0,\pi]$ follows from the fact that
the function $f_{e,M}:[0,\pi]\to[0,\pi]$ given by $f_{e,M}(E)=E-e\sin(E)-M$ is
strictly increasing.

Several solutions to the problem have been proposed since it was stated $400$ years
ago. Some authors have tried non-iterative methods to solve the equation up to a
fixed predetermined accuracy (\cite{Mark}; \cite{MC07}). However, we want to calculate
the solution with arbitrary precision, hence our interest in iterative techniques.

Kepler himself proposed to use a fixed-point iteration to solve the equation~(\cite[Ch.~1]{Colw}),
i.e.~guess $E_0$, an approximation of the exact solution $E$, and then iterate
$E_{n+1}=M+e\sin(E_n)$. This sequence converges to $E$, since
$|E_{n+1}-E|=|M+e\sin(E_n)-E|=e|\sin(E_n)-\sin(E)|\leq e|E_n-E|$, which implies that
$|E_n-E|\leq e^n|E_0-E|\longrightarrow 0$ as $n\to\infty$. The problem with this approach
is that the convergence is slow for values of $e$ near $1$. For the orbit of Mercury, which has
$e\approx 0.2$, about $5$ iterations are needed to reduce the error by a factor of
$10^{-3}$, while for values of eccentricity $e>0.5$ the fixed-point iteration is even
slower than a bisection method.

Although the fixed-point iteration does not provide an efficient solution to Kepler's
equation, it exhibits the structure of most of the current methods to solve it: first,
guess an approximation $\tilde{E}$ of the solution (called \emph{starter}), and then use some
iterative technique to produce a sequence quickly converging to the actual solution
(see \cite{Danb}, \cite{DB83}, \cite{ME13}, \cite{Pala}). For the second part, Newton's
method seems to be the most used iteration, mainly due to its conceptual simplicity,
generality and fast convergence. The guessing part, however, requires
some specific understanding on the equation and has been the subject of many recent
papers (\cite{CEMR}; \cite{Mikk}; \cite{Ng}; \cite{Nije}; \cite{OG86}; \cite{TB89}).

Starters have been compared (and optimized) using different criteria, such as
the number of iterations needed to reach certain precision, the distance to the
actual solution, the number of floating point operations needed for its
computation, etc. For this purpose, we adopt a criterion which is very specific
to Newton's method and guarantees that the iterations reduce the error at
quadratic speed. More precisely, we will only accept an approximate solution
$\tilde{E}$ of the equation $f_{e,M}(E)=0$ if Newton's method starting at
$E_0=\tilde{E}$ produces a sequence $E_n$ such that
$|E_n-E|\leq (\frac12)^{2^n-1}|\tilde{E}-E|$ for all $n\geq 0$.

Taking one of these starters satisfying $\tilde{E} \in [0,\pi ]$, the initial error is
at most $\pi$, so we obtain an accuracy $10^{-N}$ after only
$n= \lceil \log_2 \left( 1+\log_{2} (\pi)+ \log_{2} (10) N \right) \rceil$ iterations.
In particular, ten iterations of Newton's method starting from $\tilde{E}$ give an error
less than $10^{-307}$ for any input value of $e$ and $M$.

We will use a simple test, due to Smale~\cite{Smale} and later improved by Wang and
Han~\cite{WH89}, which depends only on the starter $\tilde{E}$, and guarantees the speed
of convergence that we claim.

\begin{defn}[Smale's $\alpha$-test]
We say that $\widetilde{E}$ is an \textbf{approximate zero} of $f_{e,M}$ if it
satisfies the following condition
\[
\alpha(f_{e,M},\widetilde E)=\beta(f_{e,M},\widetilde E) \cdot \gamma(f_{e,M},\widetilde  E) <\alpha_0,
\]
where
\[
\beta(f_{e,M},\widetilde{E})=\left|\frac{f_{e,M}(\widetilde{E})}{f_{e,M}'(\widetilde{E})} \right|,\quad
\gamma(f_{e,M},\widetilde E)=\sup_{k \geq 2} \left| \frac{f_{e,M}^{(k)}(\widetilde{E})}{k!f_{e,M}'(\widetilde{E})}\right|^{\frac{1}{k-1}}
\]
and $\alpha_0=3-2\sqrt{2}\approx 0.1715728$.
\end{defn}

Odell and Gooding~\cite{OG86} compiled a list of starters that have been proposed in the
literature by many authors. The following table provides a formula for those that will be
studied in this paper.
{\renewcommand{\arraystretch}{1.5}
\begin{table}[h]\footnotesize
\begin{tabular}{|c|c|}
  \hline
  \text{Starter} & \text{Formula} \\
  \hline
  \hline
  $S_1$ & $M$ \\
  \hline
  $S_2$ & $M+e\sin(M)$ \\
  \hline
  $S_3$ & $M+e\sin(M)(1+e\cos(M))$ \\
  \hline
  $S_4$ & $M+e$ \\
  \hline
  $S_5$ & $M+\frac{e\sin(M)}{1-\sin(M+e)+\sin(M)}$ \\
  \hline
  $S_6$ & $M+\frac{e(\pi-M)}{1+e}$ \\
  \hline
  $S_7$ & $\min \left\{ \frac{M}{1-e}, S_4, S_6 \right\}$ \\
  \hline
  $S_8$ & $S_3+\frac{e^4 (\pi-S_3)}{20 \pi}$ \\
  \hline
  $S_9$ &  $M+e\sin(M) (1-2e\cos(M)+e^2)^{-\nicefrac12}$\\
  \hline
  $S_{10}$ & $s-\frac{q}{s}$, where $r=\frac{3M}{e}$, $q=\frac{2(1-e)}{e}$
            and $s=[(r^2+q^3)^{\nicefrac{1}{2}} + r]^{\nicefrac13}$ \\
  \hline
\end{tabular}
\bigskip
\caption{Classical starters.}\label{table}
\end{table}
}

In section~\ref{sec-starters} we present an analytical study of the starters
$\tilde{E}=0, \pi, M, \frac{M}{1-e}$ using the notion of approximate zero.  More precisely,
for each of these starters, we obtain in Theorems~\ref{thmE0}, \ref{thmEpi}, \ref{thmEM}
and~\ref{thmEM1-e} regions where they satisfy Smale's $\alpha$-test, thus providing
approximate solutions.  We also show in Theorem~\ref{thmEcubica} that Ng's starter
$S_{10}$~\cite[Eq.~9]{Ng}, which is obtained by solving a cubic equation, gives an
approximate solution on the entire domain.

Similarly, in section \ref{sec-num-starters} we compare the remaining starters $S_2, \ldots,S_{9}$,
and the improved $S_7$ starter obtained by Calvo et al. in~\cite[Prop.~1]{CEMR}. More precisely,
we check numerically where those starters
satisfy Smale's $\alpha$-test on a very fine grid of points in $[0,1) \times [0,\pi]$.

In section \ref{sec-newstarter} we show a simple starter $\tilde{E}=\tilde{E}(e,M)$
which satisfies the $\alpha$-test for all $e\in[0,1)$ and $M\in[0,\pi]$. The starter
is a piecewise-defined function that requires a single cubic root in a small part of
the region close to the corner $e=1$, $M=0$. Apart from that root, the rest of the
expressions involved are constant or rational functions that can be computed with at
most two arithmetic operations. The
highlights of this starter are its computational simplicity and the fact that it is
formally proven to converge at quadratic speed since the first iteration, thus providing
arbitrary precision with a very few Newton's method steps. It should be noted that
reducing the initial error (i.e. the distance from the starter to the exact solution) is
not our design goal.

\begin{thm}\label{thm-starter}
The starter
\[
  \tilde{E}(e,M)=\left\{
  \begin{array}{cl}
    M & \text{if } e \leq \nicefrac12\text{ or } M \geq \nicefrac{2\pi}{3} \\
    \nicefrac{2\pi}{3} & \text{if } e \geq \nicefrac12\text{ and } \nicefrac{\pi}{4}\leq M \leq  \nicefrac{2\pi}{3} \\
    \nicefrac{\pi}{2} & \text{if } e \geq \nicefrac12\text{ and } \nicefrac{\pi}{7}\leq M \leq \nicefrac{\pi}{4} \\
    \frac{M}{1-e} & \text{if } e \geq \nicefrac12,\;M\leq\nicefrac{\pi}{7} \text{ and } M <\frac{ \sqrt[4]{12\alpha}(1-e)^{\nicefrac32}}{\sqrt{e}} \\
    \frac{\sqrt[3]{6Me^2}}{e}-\frac{2(1-e)}{\sqrt[3]{6Me^2}} & \text{otherwise}
  \end{array}
  \right.
\]
is an approximate zero of $f_{e,M}$ for all $e\in[0,1)$ and $M\in[0,\pi]$.
\end{thm}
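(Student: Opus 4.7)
The plan is a branch-by-branch verification: I would invoke the theorems of Section~\ref{sec-starters} whenever the branch coincides with a classical starter, and carry out a direct Smale $\alpha$-test computation otherwise. A preliminary step, immediate from the defining inequalities, is to check that the five branches do cover $[0,1)\times[0,\pi]$ without overlap on their interiors.

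For the branch $\tilde E=M$ on $\{e\leq\nicefrac12\}\cup\{M\geq\nicefrac{2\pi}{3}\}$ I would invoke Theorem~\ref{thmEM}; and for $\tilde E=M/(1-e)$ on $\{e\geq\nicefrac12,\ M\leq\nicefrac{\pi}{7},\ M<\sqrt[4]{12\alpha}(1-e)^{\nicefrac32}/\sqrt e\}$ I would invoke Theorem~\ref{thmEM1-e}. The quartic-in-$M$ threshold appearing in the latter region looks tailor-made for the bound $\beta\gamma<\alpha_0$ proved in that theorem, so the verification should reduce to showing that the prescribed rectangles sit inside the regions of validity of Section~\ref{sec-starters}.

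The two constant branches $\tilde E=\nicefrac{\pi}{2}$ and $\tilde E=\nicefrac{2\pi}{3}$ do not correspond to any previously studied starter, so I would compute $\beta$ and $\gamma$ directly. For a constant $\tilde E$ the higher derivatives $f_{e,M}^{(k)}(\tilde E)$ are all of the form $\pm e\sin\tilde E$ or $\pm e\cos\tilde E$, so the supremum defining $\gamma$ is explicit and can be shown to be attained at $k=2$ after an elementary estimation; meanwhile $\beta$ takes the simple form $|\tilde E-e\sin\tilde E-M|/(1-e\cos\tilde E)$. I would then use monotonicity in $M$ and in $e$ on each of the rectangles $\{e\geq\nicefrac12,\ \nicefrac{\pi}{7}\leq M\leq\nicefrac{\pi}{4}\}$ and $\{e\geq\nicefrac12,\ \nicefrac{\pi}{4}\leq M\leq\nicefrac{2\pi}{3}\}$ to reduce $\beta\gamma<\alpha_0$ to an evaluation at finitely many extreme points.

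The real difficulty lies in the ``otherwise'' branch, whose formula $\sqrt[3]{6Me^2}/e-2(1-e)/\sqrt[3]{6Me^2}$ is a simplified version of Ng's $S_{10}$: it is obtained by replacing the cubic root $s=[(r^2+q^3)^{\nicefrac12}+r]^{\nicefrac13}$ by its leading part $(2r)^{\nicefrac13}$, so Theorem~\ref{thmEcubica} does not apply verbatim. However, the region on which this branch is active is exactly where $M^2 e$ dominates $(1-e)^3$, so the discarded correction is under control. The strategy is to substitute the starter into $\beta$ and $\gamma$, expand $f_{e,M}(\tilde E)$ in powers of $M$ and $1-e$, and exploit the standing lower bound $M\geq\sqrt[4]{12\alpha}(1-e)^{\nicefrac32}/\sqrt e$ together with the upper bound $M\leq\nicefrac{\pi}{7}$ (or its counterpart from the neighbouring regions) to derive $\beta\gamma<\alpha_0$ uniformly. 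I expect this last estimate to be the main technical hurdle, since turning the heuristic ``close to $S_{10}$'' into a sharp inequality will require tracking the discarded term through the formula for $\gamma$ rather than just $\beta$.
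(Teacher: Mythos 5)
Your plan matches the paper's proof essentially step for step: the paper decomposes the domain by branch and invokes Theorem~\ref{thmEM} for $\tilde E=M$, Theorem~\ref{thmEM1-e} for $\tilde E=M/(1-e)$, and proves three fresh lemmas (Theorems~\ref{thmE2pi3}, \ref{thmEpi2}, \ref{thmEmartin}) for the two constant branches and the cubic-root branch, each by bounding $\beta$ and $\gamma$ directly and exploiting monotonicity in $e$ and $M$. Two details worth correcting before you carry this out. First, for the constant branches the supremum defining $\gamma$ is \emph{not} attained at $k=2$: the paper finds $\gamma(f_{e,M},\nicefrac{2\pi}{3})=\bigl(\tfrac{e\sqrt3/2}{4!(1+e/2)}\bigr)^{1/3}$, i.e.\ $k=4$, and $\gamma(f_{e,M},\nicefrac{\pi}{2})=\max\{\nicefrac{e}{2},\sqrt[3]{\nicefrac{e}{24}}\}$, where which term wins depends on whether $e\gtrless 1/\sqrt3$ (both cases occur for $e\in[\nicefrac12,1)$). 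So the ``elementary estimation'' you defer must use Lemma~\ref{lemma_sup} with $n=4$ or $5$ and then compare the finitely many remaining candidates; assuming $k=2$ dominates would give a wrong bound for $\tilde E=\nicefrac{2\pi}{3}$. Second, the ``otherwise'' branch is not defined by a single inequality but as the complement of the other four regions, so you additionally need the containment of this complement inside the region where the cubic formula is proved to work; the paper handles this with the elementary numerical inequality $\sqrt[4]{12\alpha_0}>\tfrac{8}{27\sqrt6\,\alpha_0}$, which shows the threshold $M\geq\sqrt[4]{12\alpha_0}(1-e)^{3/2}/\sqrt e$ already implies the weaker threshold appearing in the region $R_7$ of Theorem~\ref{thmEmartin}. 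With those two points patched your plan is the paper's argument.
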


\begin{figure}[ht!]
\includegraphics[width=0.4\textwidth]{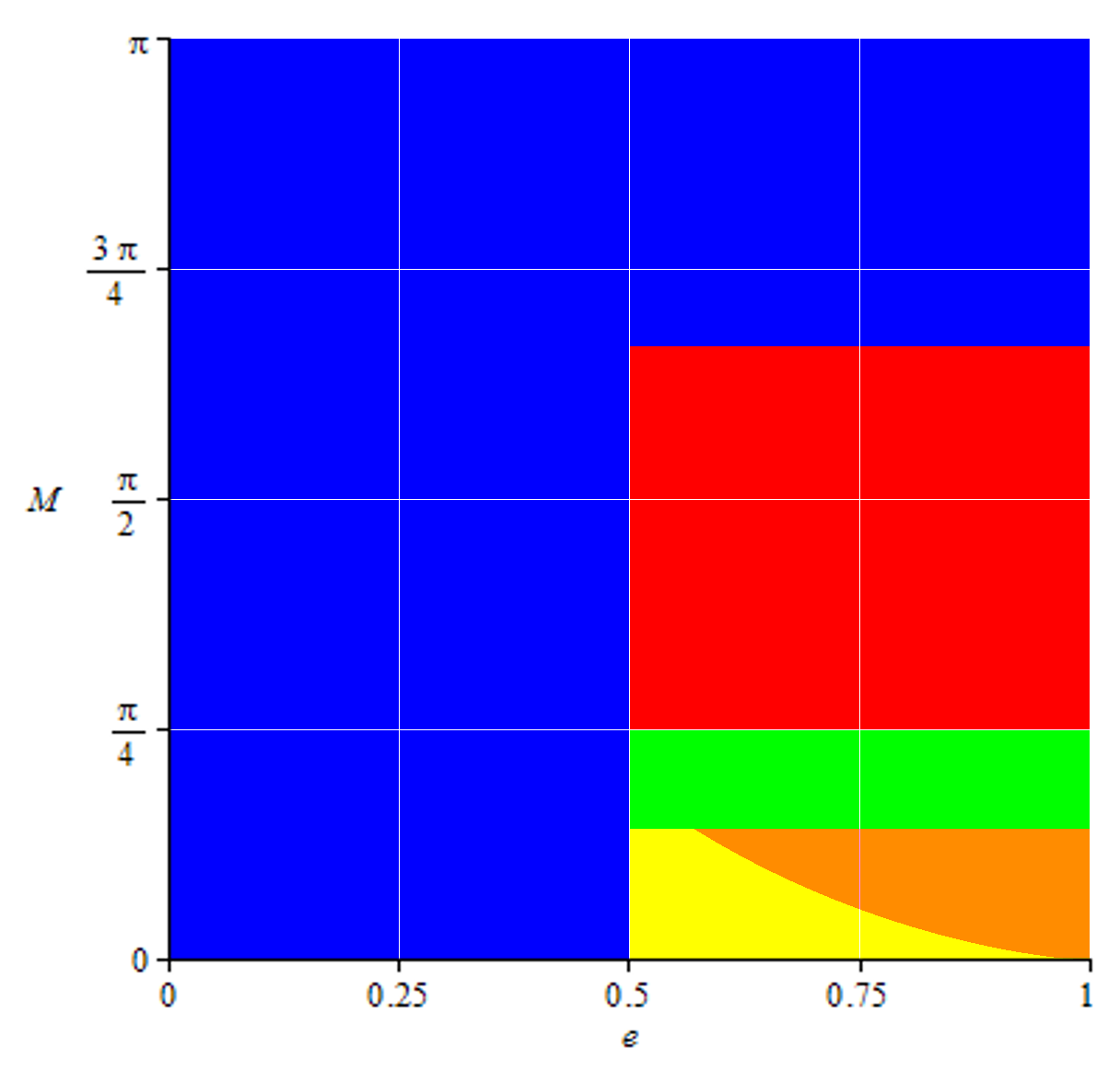}
\caption{The points where $\tilde{E}=M$, $\tilde{E}=\frac{3\pi}2$ and $\tilde{E}=\frac\pi2$ satisfy the $\alpha$-test for
$f_{e,M}(E)$ are shown in blue, red and green respectively. The ones of $\tilde{E}=\frac{M}{1-e}$ and
$\tilde{E}=\frac{\sqrt[3]{6Me^2}}{e}-\frac{2(1-e)}{\sqrt[3]{6Me^2}}$ appear in yellow and orange. }
\label{figEtodo}
\end{figure}

This way of constructing an approximate solution by a piecewise function can be compared
to Ng's approach (see Figure~2 of \cite{Ng}). However, our function is computationally
simpler because Ng's formula outside the corner uses rational functions involving many terms
and near the corner uses $S_{10}$, which requires at least a cubic and a square root
for its computation.

The region near the $(1,0)$ corner where a cubic root is needed can be reduced as much as
desired but cannot be completely avoided, as the following two results show. Other authors
have found similar obstructions in handling values of the eccentricity near $1$ (\cite{Mikk};
\cite{Ng}; \cite{Nije}).

\begin{thm}\label{thm-grid}
For any $\varepsilon>0$, there is a piecewise constant function $\tilde{E}$ defined in
$([0,1) \times [0,\pi]) \setminus ([1-\varepsilon,1] \times [0,\arccos(1-\varepsilon)])$
that satisfies the $\alpha$-test.
\end{thm}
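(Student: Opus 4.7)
The plan is a compactness argument: the exact solution trivially passes the $\alpha$-test (with value $0$), so by continuity a constant close to it will still pass the test on a small neighborhood of any given point; finitely many such neighborhoods then suffice to cover the domain.

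To make this precise, fix a constant $c$ and view $\alpha(f_{e,M}, c)$ as a function of $(e,M)$. The numerator $\beta(f_{e,M},c) = |c - e\sin c - M|/|1-e\cos c|$ is clearly continuous in $(e,M)$ whenever $1-e\cos c > 0$. For $\gamma$, the bound $|f^{(k)}_{e,M}(c)|\leq e \leq 1$ (valid for all $k\geq 2$, since only the $-e\sin E$ term contributes beyond the first derivative) implies that each term $(|f^{(k)}_{e,M}(c)|/(k!|f'_{e,M}(c)|))^{1/(k-1)}$ is uniformly bounded by $(1/(k!\delta))^{1/(k-1)}$ on any neighborhood where $|f'_{e,M}(c)|\geq \delta > 0$. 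Since this bound tends to $0$ as $k\to\infty$, the supremum defining $\gamma$ reduces to a maximum over finitely many $k$, so $\gamma$ and hence $\alpha$ are continuous in $(e,M)$. For any $(e_0,M_0)\in[0,1)\times[0,\pi]$, let $E^*(e_0,M_0)$ be the unique solution in $[0,\pi]$; choosing $c=E^*(e_0,M_0)$ gives $\beta=0$, hence $\alpha=0<\alpha_0$ at $(e_0,M_0)$, so by continuity there is an open rectangle $U\ni(e_0,M_0)$ on which $\alpha(f_{e,M},c)<\alpha_0$.

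The same reasoning must also apply at limiting points on the line $e=1$, otherwise one cannot extract a \emph{finite} cover as $e\to 1$. For $M_0\in[\arccos(1-\varepsilon),\pi]$, the equation $E-\sin E=M_0$ has a unique solution $E^*_\infty\geq M_0\geq\arccos(1-\varepsilon)>0$, so $f'_{e,M}(E^*_\infty)=1-e\cos E^*_\infty$ remains bounded below by $1-\cos E^*_\infty>0$ as $(e,M)\to(1,M_0)$. Picking the constant $c=E^*_\infty$, continuity of $\alpha(f_{e,M},c)$ at $(1,M_0)$ yields an open rectangle around $(1,M_0)$ inside $[0,1]\times[0,\pi]$ on which the $\alpha$-test is satisfied. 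This is precisely where the exclusion of the corner $[1-\varepsilon,1]\times[0,\arccos(1-\varepsilon)]$ is used: it forces $E^*_\infty$ to stay away from $0$ and so prevents $f'$ from collapsing.

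Finally, the set $K=([0,1-\varepsilon]\times[0,\pi])\cup([1-\varepsilon,1]\times[\arccos(1-\varepsilon),\pi])$ is a compact subset of $[0,1]\times[0,\pi]$ which contains the domain $D_\varepsilon=([0,1)\times[0,\pi])\setminus([1-\varepsilon,1]\times[0,\arccos(1-\varepsilon)])$ on which we need $\widetilde{E}$. The open rectangles constructed above form an open cover of $K$; extract a finite subcover $U_1,\ldots,U_N$ with associated constants $c_1,\ldots,c_N$ and define $\widetilde{E}(e,M)=c_i$, where $i$ is the least index with $(e,M)\in U_i$. Restricted to $D_\varepsilon$, this is a piecewise constant function satisfying the $\alpha$-test. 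The main obstacle throughout is the uniform control of $\alpha$ as $e\to 1$; the height $\arccos(1-\varepsilon)$ of the excluded rectangle in the $M$-direction is exactly the quantity that keeps $E^*$ bounded below by a positive constant and makes the covering argument work.
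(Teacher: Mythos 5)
Your proof is correct, but it takes a genuinely different route from the paper's. The paper's argument is explicitly constructive: it fixes $N > \tfrac{\pi+2}{2\alpha_0\varepsilon^2}$, partitions $[0,1)\times[0,\pi]$ into an $N\times N$ grid, and uses bisection to precompute a starter $E_{ij}$ at each grid point that solves Kepler's equation up to error $1/N$; it then verifies the three estimates $|f(\tilde E)|<\tfrac{\pi+2}{N}$, $|f'(\tilde E)|\geq\varepsilon$, and $\gamma\leq\tfrac{1}{2\varepsilon}$ directly to conclude $\alpha<\alpha_0$. Your proof instead observes that the exact solution gives $\alpha=0$, that $\alpha(f_{e,M},c)$ is continuous (indeed, $\beta$ is clearly continuous and $\gamma$ is a uniform limit of finite maxima of continuous terms on any region where $|f'|$ is bounded below, so $\gamma$ is also continuous — though boundedness of $\gamma$ alone would suffice here), and then invokes compactness of $K=([0,1-\varepsilon]\times[0,\pi])\cup([1-\varepsilon,1]\times[\arccos(1-\varepsilon),\pi])$, with the key point being that the excluded corner keeps $E^*$, hence $f'$, bounded away from zero as $e\to1$. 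The trade-off is clear: your approach is conceptually cleaner and shorter, but nonconstructive — it gives neither the number of pieces (the finite subcover is extracted abstractly) nor an algorithm to compute the constants $c_i$; the paper's proof gives an explicit $N=N(\varepsilon)$ and a concrete recipe (bisection on a grid), which is precisely what the paper later exploits when it remarks that this result is the basis for building lookup tables of starters. A minor stylistic note: your claim that ``the supremum defining $\gamma$ reduces to a maximum over finitely many $k$'' should more carefully say that on a neighborhood where $|f'|\geq\delta>0$ the tail terms are uniformly small so $\gamma$ is a uniform limit of finite maxima; the conclusion (continuity, or just local boundedness) is the same.
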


\begin{thm}\label{thm-unavoidable}
Let $\tilde{E}$ be a piecewise rational function in $[0,1) \times [0,\pi]$ with a finite
number of branches defined by polynomial inequalities. Then there exists $(e_0,M_0)$ such
that $\tilde{E}(e_0,M_0)$ is not an approximate zero of $f_{e_0,M_0}$.
\end{thm}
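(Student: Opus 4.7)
My plan is to argue by contradiction, exploiting the asymptotics of $E(e,M)$ near the corner $(e,M)=(1,0)$ along the semialgebraic arcs $\gamma_c(t)=(1-t^2,ct^3)$ for $c>0$. Substituting $e=1-t^2$, $M=ct^3$ into $E-e\sin(E)=M$ and expanding in $t$ gives
\[
E(\gamma_c(t))=x(c)\,t+O(t^2),\qquad x(c)+\frac{x(c)^3}{6}=c,
\]
so $x(c)\sim(6c)^{1/3}$ as $c\to\infty$. The non-integer growth exponent $1/3$ is the source of the obstruction, because any rational function of $(e,M)$ restricts to a Laurent series in $t$ along $\gamma_c$ whose coefficients are rational in $c$, and rational functions in $c$ grow only like integer powers.

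Given a piecewise rational $\tilde{E}$ with polynomial-inequality branches, supposed to satisfy the $\alpha$-test everywhere, I would first reduce to a single branch via the semialgebraic structure: for each $c>0$, the point $\gamma_c(t)$ lies in a single branch for all small $t>0$, and the resulting assignment $c\mapsto\mathrm{branch}(c)$ is semialgebraic, partitioning $(0,\infty)$ into finitely many intervals. At least one of them is unbounded, say $(L,\infty)$, and on the corresponding branch $\tilde{E}$ equals a fixed rational function $R=P/Q$. After cancelling any common zeros of $P$ and $Q$ at $(1,0)$ I would expand
\[
\tilde{E}(\gamma_c(t))=a_0(c)+\mu(c)\,t+O(t^2),
\]
where $a_0(c)$ and $\mu(c)$ are rational in $c$ and boundedness of $\tilde{E}$ (forced by the $\alpha$-test) excludes negative powers of $t$. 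The next step is to force $a_0(c)\equiv 0$ on $(L,\infty)$ using the fact that $\alpha(f_{1,0},a)>\alpha_0$ for every $a\in(0,\pi]$: as $a\to 0^{+}$ the limit equals $\tfrac13$ because $f_{1,0}$ has a triple zero at $0$, and a minimisation on $(0,\pi]$ using either $\beta\gamma_2$ or $\beta\gamma_3$ yields an infimum of approximately $0.22$, well above $\alpha_0=3-2\sqrt{2}\approx 0.172$, so continuity as $t\to 0^{+}$ forces $a_0(c)=0$.

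With $a_0\equiv 0$ and $\tilde{E}=\mu(c)\,t+O(t^2)$, direct Taylor expansions of $f_{e,M}$ along $\gamma_c$ give
\[
f(\tilde{E})=\bigl(F(\mu)-c\bigr)t^3+O(t^4),\qquad f'(\tilde{E})=\left(1+\tfrac{\mu^2}{2}\right)t^2+O(t^3),\qquad F(y)=y+\tfrac{y^3}{6},
\]
and the bound $|f_{e,M}^{(k)}|\leq 1$ applied to $\gamma_3$ yields
\[
\beta(f_{e,M},\tilde{E})\,\gamma(f_{e,M},\tilde{E})\;\geq\;\frac{2\,|F(\mu(c))-c|}{\sqrt{3}\,(2+\mu(c)^2)^{3/2}}+O(t).
\]
Since $\mu(c)$ is rational in $c$, $\mu(c)\sim A\,c^{n}$ for some integer $n\in\Z$ as $c\to\infty$ (with $\mu\equiv 0$ a degenerate case handled directly, giving $\beta\gamma\sim c/\sqrt{6}\to\infty$). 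For $n\leq 0$, $(2+\mu^2)^{3/2}$ is bounded while $|F(\mu)-c|\sim c$ and the lower bound diverges. For $n\geq 1$, the $\mu^3/6$ term of $F(\mu)$ dominates $c$, $(2+\mu^2)^{3/2}\sim|\mu|^3$, and the lower bound tends to $\tfrac{1}{3\sqrt{3}}\approx 0.192$. Since $\tfrac{1}{3\sqrt{3}}>3-2\sqrt{2}$ (equivalently, $\sqrt{3}+18\sqrt{2}>27$), in both regimes we find $c_0\in(L,\infty)$ and $t_0>0$ with $\beta\gamma>\alpha_0$ at $(e_0,M_0)=\gamma_{c_0}(t_0)$, contradicting the $\alpha$-test. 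The main technical point I expect is the Laurent-series bookkeeping when $P$ and $Q$ vanish simultaneously at $(1,0)$, which is what produces the non-trivial rational $\mu(c)$; once this is set up, the mismatch between the integer-power asymptotics of any rational $\mu(c)$ and the $c^{1/3}$ growth of $x(c)$ closes the argument.
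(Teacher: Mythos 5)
Your proposal is correct in outline and takes a genuinely different route from the paper, even though both exploit the same underlying obstruction: at $e=1$ the exact solution scales like $(6M)^{1/3}$, and rational functions cannot produce a cube-root exponent. The paper works directly on the segment $\{1\}\times(0,\delta)$: after a bootstrap argument showing $|\tilde{E}|<1.725$ near the corner, it restricts $\tilde E$ to $e=1$, proves $\tilde E_1(M)\to 0$ via $|\tilde E_1-\sin\tilde E_1-M|\leq 0.161|\tilde E_1|^3$, writes $\tilde E_1(M)=Mr(M)$ with $r$ analytic, and deduces the contradiction $1\leq 0$ in the limit $M\to 0^{+}$, using $\gamma\geq\max\{\gamma_2,\gamma_3\}$. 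You instead fan out a one-parameter family of blow-up curves $\gamma_c(t)=(1-t^2,ct^3)$, reduce to a single branch for $c$ large, extract the first-order coefficient $\mu(c)$ (rational in $c$), and compare the forced asymptotic $\mu(c)\sim(6c)^{1/3}$ against the integer-power growth any rational $\mu$ must have; the contradiction is quantitative, $\liminf\alpha\geq\min\{\tfrac{1}{3\sqrt3},\infty\}>\alpha_0$. Your version has the advantage of making the exponent mismatch completely explicit and of localizing exactly \emph{where} $\alpha$ exceeds $\alpha_0$, at the cost of extra machinery (the semialgebraic reduction of $c\mapsto\mathrm{branch}(c)$ to finitely many intervals, which you assert but do not prove). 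A few steps in your sketch are glossed over and should be filled in to match the paper's rigor: (i) the boundedness of $\tilde E$ near $(1,0)$ needs the bootstrap the paper carries out, since it is not immediate from the $\alpha$-test; (ii) the exclusion of $a_0(c)\neq 0$ only verifies $\alpha(f_{1,0},a)>\alpha_0$ for $a\in(0,\pi]$, but boundedness only gives $|a_0(c)|\leq C$ for some $C$ that may exceed $\pi$ and may be negative, so the minimisation must cover all $a$ in that set (the extension is easy by oddness of $f_{1,0}$ and because $\beta\to\infty$ near the nonzero multiples of $2\pi$, but it must be stated); (iii) you should note that for the finitely many $c$ with $Q(1-t^2,ct^3)\equiv 0$ in $t$ the expansion fails, and discard them. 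None of these are fatal, so the proposal stands as a valid alternative argument.
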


The starter defined in Theorem~\ref{thm-grid} can
be extended if $\varepsilon < 1-\cos(\nicefrac{\pi}{7})$ to the whole region by using
$\frac{M}{1-e}$ and $\frac{\sqrt[3]{6Me^2}}{e}-\frac{2(1-e)}{\sqrt[3]{6Me^2}}$ in the corner,
as in Theorem~\ref{thm-starter}. This result is the basis for a constructing lookup tables
of starters.

Finally, Theorem~\ref{thm-unavoidable} and Remark~\ref{remark-unavoidable} show that the
classical starters $S_1, \ldots, S_8$ and the improved $S_7$ of \cite{CEMR} will necessarily
fail near the corner $(1,0)$, as Figures~\ref{figMM1-e}, \ref{figS234}, \ref{figS567} and~\ref{figS89}
will later illustrate. Our theorem also excludes the possibility of using truncated power
series (with integer exponents) for approximate zeros near the corner.

\section{Analytical study of classical starters via $\alpha$-theory}\label{sec-starters}

In this section we find regions where the starters $\tilde{E}=0,\pi,M,\frac{M}{1-e}$ are approximate zeros of Kepler's equation in  Theorems \ref{thmE0}, \ref{thmEpi}, \ref{thmEM} and \ref{thmEM1-e}. We compare these with the regions computed numerically on a fine grid in Figures \ref{fig0pi} and \ref{figMM1-e}. We also show that Ng's starter $S_{10}$ works in the entire region in Theorem \ref{thmEcubica}.

Throughout the paper, we will need the following technical result.
\begin{lem}\label{lemma_sup}
Let $n\geq 2$ and $x\geq \frac{n!}{(n+1)^{n-1}}$. Then, the sequence
$\{(\frac{x}{k!})^{\frac{1}{k-1}}\}_{k\geq n}$ is decreasing.
\end{lem}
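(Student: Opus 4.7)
The plan is to reduce the monotonicity of $a_k := (x/k!)^{1/(k-1)}$ to an explicit inequality on $x$, and then verify that inequality for every $k \geq n$ using only the hypothesis, by showing that the right-hand side of the inequality is itself decreasing in $k$.

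First, I would compare $a_{k+1}$ and $a_k$ by raising both to the $k(k-1)$-th power (so that the fractional exponents disappear and the direction of the inequality is preserved since $x > 0$). After clearing $(k!)^{k-1}$ from both sides, the inequality $a_{k+1} \leq a_k$ reduces, by a direct algebraic manipulation, to
\[
  x \;\geq\; \frac{k!}{(k+1)^{k-1}} \;=:\; b_k .
\]
So the lemma is equivalent to showing $x \geq b_k$ for every $k \geq n$.

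The second step is the monotonicity of $b_k$. A direct computation of the ratio gives
\[
  \frac{b_{k+1}}{b_k} \;=\; \frac{(k+1)!}{(k+2)^k}\cdot\frac{(k+1)^{k-1}}{k!} \;=\; \left(\frac{k+1}{k+2}\right)^{k} ,
\]
which is strictly less than $1$ for all $k\geq 1$. Hence $\{b_k\}$ is strictly decreasing on $k\geq 1$, and in particular $b_k \leq b_n$ for all $k\geq n$. Combining this with the hypothesis $x\geq b_n = \frac{n!}{(n+1)^{n-1}}$ yields $x\geq b_k$ for every $k\geq n$, which by the first step is exactly the desired monotonicity of $\{a_k\}_{k\geq n}$.

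There is no real obstacle here: the only small subtlety is making sure the reduction from $a_{k+1}\leq a_k$ to $x\geq b_k$ is done by raising to the correct positive power (so no reversal of the inequality occurs), and that the ratio $b_{k+1}/b_k$ simplifies cleanly, which it does because the factor $(k+1)!/k! = k+1$ matches the numerator of $(k+1)^{k-1}$ to produce $(k+1)^k$. Everything else is routine algebra.
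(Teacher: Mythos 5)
Your proposal is correct and follows essentially the same route as the paper: reduce $a_{k+1}\le a_k$ to $x\ge k!/(k+1)^{k-1}$ by clearing denominators, then show that $k!/(k+1)^{k-1}$ is decreasing via the ratio $\bigl(\tfrac{k+1}{k+2}\bigr)^k<1$, and invoke the hypothesis $x\ge n!/(n+1)^{n-1}$. No meaningful difference from the paper's argument.
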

\begin{proof}
It is enough to show that $(\frac{x}{k!})^{\frac{1}{k-1}}\geq(\frac{x}{(k+1)!})^{\frac{1}k}$ for all $k\geq n$, which
is equivalent to the inequality $(\frac{x}{k!})^k\geq( \frac{x}{(k+1)!})^{k-1}$, or more simply
$x\geq \frac{k!}{(k+1)^{k-1}}$.
Note that the sequence $\frac{k!}{(k+1)^{k-1}}$ is decreasing, since
\[
  \frac{(k+1)!(k+1)^{k-1}}{k!(k+2)^k}=\frac{(k+1)^k}{(k+2)^k}<1.
\]
In particular, $x\geq \frac{n!}{(n+1)^{n-1}}\geq \frac{k!}{(k+1)^{k-1}}$ for all $k\geq n$, as we needed.
\end{proof}

\begin{thm}\label{thmE0}
$\tilde{E}=0$ is an approximate zero of $f_{e,M}(E)$ in the
region $R_1\cup R_2$, where
\[
\begin{aligned}
   R_1 &= \left\{0\leq M\leq 4\alpha_0(1-e),\, 0\leq e\leq \frac{3}{11} \right\},\\
   R_2 &= \left\{0\leq M\leq\frac{\sqrt{6}\alpha_0(1-e)^{\nicefrac32}}{\sqrt{e}},\, \frac{3}{11}\leq e<1 \right\}.
\end{aligned}
 \]
\end{thm}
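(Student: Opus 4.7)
The plan is to apply Smale's $\alpha$-test directly at $\tilde{E}=0$. From $f_{e,M}(E)=E-e\sin(E)-M$ one reads off $f_{e,M}(0)=-M$, $f'_{e,M}(0)=1-e$, and for $k\geq 2$ the higher derivatives satisfy $f^{(k)}_{e,M}(0)=0$ whenever $k$ is even and $|f^{(k)}_{e,M}(0)|=e$ whenever $k$ is odd. Hence
\[
\beta(f_{e,M},0)=\frac{M}{1-e},\qquad \gamma(f_{e,M},0)=\sup_{k\geq 3,\,k\text{ odd}}\left(\frac{e}{k!(1-e)}\right)^{1/(k-1)}.
\]
Writing $x=e/(1-e)$, the task reduces to estimating $\sup_{k\geq 3,\,k\text{ odd}}(x/k!)^{1/(k-1)}$ and then verifying $\beta\gamma<\alpha_0$.

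Next I would split into the two cases that correspond to $R_1$ and $R_2$. For $R_2$, where $e\geq 3/11$, one has $x\geq 3/8=3!/(3+1)^{3-1}$, so Lemma~\ref{lemma_sup} (with $n=3$) makes the sequence $\{(x/k!)^{1/(k-1)}\}_{k\geq 3}$ decreasing, and the supremum is attained at $k=3$ with value $\sqrt{e/(6(1-e))}$. Consequently
\[
\alpha(f_{e,M},0)=\beta\gamma=\frac{M\sqrt{e}}{\sqrt{6}\,(1-e)^{3/2}},
\]
and the inequality $\alpha<\alpha_0$ is precisely the defining condition of $R_2$.

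For $R_1$, where $e\leq 3/11$ and so $x\leq 3/8$, Lemma~\ref{lemma_sup} is no longer available from $k=3$, so I would bound each term separately. The inequality $(x/k!)^{1/(k-1)}\leq 1/4$ is equivalent to $x\leq k!/4^{k-1}$, so it suffices to show $\min_{k\geq 3,\,k\text{ odd}}k!/4^{k-1}=3/8$. The ratio at consecutive odd indices is $(k+2)(k+1)/16\geq 20/16>1$, hence $k\mapsto k!/4^{k-1}$ is strictly increasing on odd $k\geq 3$, and its minimum $3/8$ is indeed attained at $k=3$. Thus $x\leq 3/8$ forces $\gamma\leq 1/4$, and $\alpha\leq M/(4(1-e))<\alpha_0$ is equivalent to $M<4\alpha_0(1-e)$, exactly the condition defining $R_1$.

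There is no real obstacle in this argument once the derivatives at $0$ are written down; the only things to watch are (i) that the supremum in the definition of $\gamma$ runs over all $k\geq 2$, so the vanishing even-order terms must be explicitly handled, and (ii) that the two bounds on $M$ should—and do—agree on the common boundary $e=3/11$, where both evaluate to $M\leq 32\alpha_0/11$, so the two regions join continuously.
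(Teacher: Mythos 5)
Your proof is correct and follows the same overall plan as the paper's: read off $f^{(k)}_{e,M}(0)$, observe that only odd $k$ contribute to $\gamma$, invoke Lemma~\ref{lemma_sup} with $n=3$ to resolve the case $e\geq 3/11$ (yielding $R_2$), and bound $\gamma\leq 1/4$ directly when $e\leq 3/11$ (yielding $R_1$). One point worth highlighting: your argument for $R_1$ is actually tighter than the paper's. The paper bounds $\left(\frac{e}{k!(1-e)}\right)^{1/(k-1)}\leq\left(\frac{1}{16}\right)^{1/(k-1)}\leq\frac14$ for all $k\geq 3$, but the second inequality in that chain fails already at $k=5$, where $(1/16)^{1/4}=1/2$. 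Your version avoids this by working with the exact threshold: $(x/k!)^{1/(k-1)}\leq 1/4$ iff $x\leq k!/4^{k-1}$, and then checking directly that $k\mapsto k!/4^{k-1}$ is increasing on odd $k\geq 3$ with minimum $3/8$ at $k=3$. This is the correct way to close the gap, and it gives exactly the same region $R_1$. The closing observation that both bounds on $M$ agree at $e=3/11$ (both equal $32\alpha_0/11$) is also a useful sanity check that the paper does not make explicit.
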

\begin{proof}
It is enough to show that $\alpha(f_{e,M},0)<\alpha_0$, which is equivalent to
\[
  \frac{M}{1-e}\sup_{\doslineas{k\geq 3}{k\;\text{odd}}}\left(\frac{e}{k!(1-e)}\right)^{\frac{1}{k-1}}<\alpha_0,
\]
since $f(0)=-M$, $f'(0)=1-e$, $f^{(\text{even})}(0)=0$ and $f^{(\text{odd})}(0)=\pm e$.
When $e\in[\nicefrac{3}{11},1)$, we have $\frac{e}{1-e}\geq\frac{3}{8}$, and by Lemma~\ref{lemma_sup},
\[
  \sup_{\doslineas{k\geq 3}{k\;\text{odd}}}\left(\frac{e}{k!(1-e)}\right)^{\frac{1}{k-1}}
  = \sqrt{\frac{e}{6(1-e)}}.
\]
In this case, Smale's $\alpha$-test translates into $\frac{M\sqrt{e}}{\sqrt{6}(1-e)^{3/2}}<\alpha_0$,
which corresponds to the region $R_2$. For the remaining case, $e\in[0,\nicefrac{3}{11}]$, we have that
$\frac{e}{1-e}\leq\frac{3}{8}$, so
\[
  \left(\frac{e}{k!(1-e)}\right)^{\frac{1}{k-1}}\leq\left(\frac{1}{16}\right)^{\frac{1}{k-1}}
  \leq\frac{1}{4}\quad\forall\,k\geq 3.
\]
This means that Smale's condition is implied by $\frac{M}{4(1-e)}<\alpha_0$, which corresponds
to the region $R_1$.
\end{proof}

\begin{thm}\label{thmEpi}
$\tilde{E}=\pi$ is an approximate zero of $f_{e,M}(E)$ in the
region $R_3\cup R_4$, where
\[
  \begin{aligned}
   R_3 &= \left\{\pi-4\alpha_0(1+e)<M\leq\pi,\; 0\leq e\leq\frac{3}{5} \right\},\\
   R_4 &= \left\{\pi-\frac{\sqrt{6}\alpha_0(1+e)^{\nicefrac32}}{\sqrt{e}}<M\leq\pi,\;\frac{3}{5}\leq e<1 \right\}.
  \end{aligned}
\]
\end{thm}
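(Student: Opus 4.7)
The plan is to run the proof of Theorem~\ref{thmE0} again but at the endpoint $E=\pi$ instead of $E=0$, which will replace $M$ by $\pi-M$ and $1-e$ by $1+e$ throughout. First I would compute the data entering the $\alpha$-test: since $f_{e,M}(E)=E-e\sin(E)-M$, evaluation at $E=\pi$ gives $f_{e,M}(\pi)=\pi-M$, $f_{e,M}'(\pi)=1-e\cos(\pi)=1+e$, and $f_{e,M}^{(k)}(\pi)=-e\sin^{(k)}(\pi)$ for $k\ge 2$, so the even-order derivatives vanish and the odd-order ones satisfy $|f_{e,M}^{(k)}(\pi)|=e$. Consequently $\beta(f_{e,M},\pi)=\frac{\pi-M}{1+e}$ and the $\alpha$-test becomes
\[
  \frac{\pi-M}{1+e}\sup_{\doslineas{k\geq 3}{k\;\text{odd}}}\left(\frac{e}{k!(1+e)}\right)^{\frac{1}{k-1}}<\alpha_0.
\]

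Next I would split on whether the sup is dominated by the first term $k=3$. Applying Lemma~\ref{lemma_sup} with $n=3$ to $x=\frac{e}{1+e}$, the sequence is decreasing exactly when $\frac{e}{1+e}\ge \frac{3!}{4^{2}}=\frac{3}{8}$, i.e. $e\ge\frac{3}{5}$. In that case the sup equals $\sqrt{\frac{e}{6(1+e)}}$, and Smale's condition reads $\pi-M<\frac{\sqrt 6\,\alpha_0(1+e)^{3/2}}{\sqrt{e}}$, which is precisely the region $R_4$.

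For the complementary range $0\le e\le\frac{3}{5}$ one has $\frac{e}{1+e}\le\frac{3}{8}$, so $\frac{e}{k!(1+e)}\le\frac{3}{8\cdot 6}=\frac{1}{16}$ for every $k\ge 3$, and therefore
\[
  \left(\frac{e}{k!(1+e)}\right)^{\frac{1}{k-1}}\le\left(\frac{1}{16}\right)^{\frac{1}{k-1}}\le\frac{1}{4}\quad\text{for all } k\ge 3.
\]
Hence it suffices to require $\frac{\pi-M}{4(1+e)}<\alpha_0$, which is the region $R_3$.

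There is no real obstacle here: the argument is a symmetric mirror of the proof of Theorem~\ref{thmE0}, with the only arithmetic change being that the monotonicity threshold on the sequence of Lemma~\ref{lemma_sup} shifts from $\frac{e}{1-e}\ge\frac{3}{8}$ (which gave $e\ge\frac{3}{11}$) to $\frac{e}{1+e}\ge\frac{3}{8}$ (which gives $e\ge\frac{3}{5}$). The only small care point is verifying that the sup is indeed taken over odd $k\ge 3$ rather than all $k\ge 2$, but this follows from the vanishing of the even-order derivatives of $\sin$ at $\pi$, exactly as in Theorem~\ref{thmE0}.
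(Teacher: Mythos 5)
Your proposal reproduces the paper's own argument essentially verbatim: the same derivative computations at $\pi$, the same reduction of the $\alpha$-test to $\frac{\pi-M}{1+e}\sup_{k\ge 3,\,k\text{ odd}}\bigl(\frac{e}{k!(1+e)}\bigr)^{1/(k-1)}<\alpha_0$, the same split at $e=\nicefrac35$ coming from the threshold $\frac{e}{1+e}\ge\nicefrac38$ of Lemma~\ref{lemma_sup}, and the same regions $R_3$ and $R_4$; you even silently correct a typo in the paper's proof, which prints $\sqrt{e/(6(1-e))}$ where it should read $\sqrt{e/(6(1+e))}$. The one slip you share with the paper is the intermediate step $\bigl(\frac{1}{16}\bigr)^{1/(k-1)}\le\nicefrac14$, which fails for $k>3$; the intended conclusion $\sup_{k\ge3}\bigl(\frac{3/8}{k!}\bigr)^{1/(k-1)}=\nicefrac14$ should instead be obtained by invoking Lemma~\ref{lemma_sup} directly with $x=\nicefrac38$ and $n=3$ before, not after, dropping the $k!$ from the numerator.
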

\begin{proof}
Since $f(\pi)=\pi-M$, $f'(\pi)=1+e$, $f^{(\text{even})}(\pi)=0$ and $f^{(\text{odd})}(\pi)=\pm e$, Smale's
$\alpha$-test is equivalent to
\[
  \frac{\pi-M}{1+e}\sup_{\doslineas{k\geq 3}{k\;\text{odd}}}
                   \left(\frac{e}{k!(1+e)}\right)^{\frac{1}{k-1}}<\alpha_0.
\]
For any $e\in[0,\nicefrac{3}{5}]$, we have $\frac{e}{1+e}\leq\frac{3}{8}$. This gives the following
estimate for the supremum:
\[
  \left(\frac{e}{k!(1+e)}\right)^{\frac{1}{k-1}}\leq
  \left(\frac{\nicefrac38}{k!}\right)^{\frac{1}{k-1}}\leq
  \left(\frac{1}{16}\right)^{\frac{1}{k-1}}\leq
  \frac{1}{4},\quad\forall\,k\geq 3.
\]
This means that Smale's condition is implied by $\frac{\pi-M}{4(1+e)}<\alpha_0$, which
corresponds exactly to the region $R_3$. For the other case, where $e\in[\nicefrac{3}{5},1)$,
the supremum is $\sqrt{\frac{e}{6(1-e)}}$ by Lemma~\ref{lemma_sup}, so the $\alpha$-condition
is reduced to
\[
  \frac{(\pi-M)\sqrt{e}}{\sqrt{6}(1+e)^{\nicefrac32}} < \alpha_0,
\]
which corresponds to the region $R_4$.
\end{proof}

%
\begin{figure}[ht!]
\includegraphics[width=0.45\textwidth]{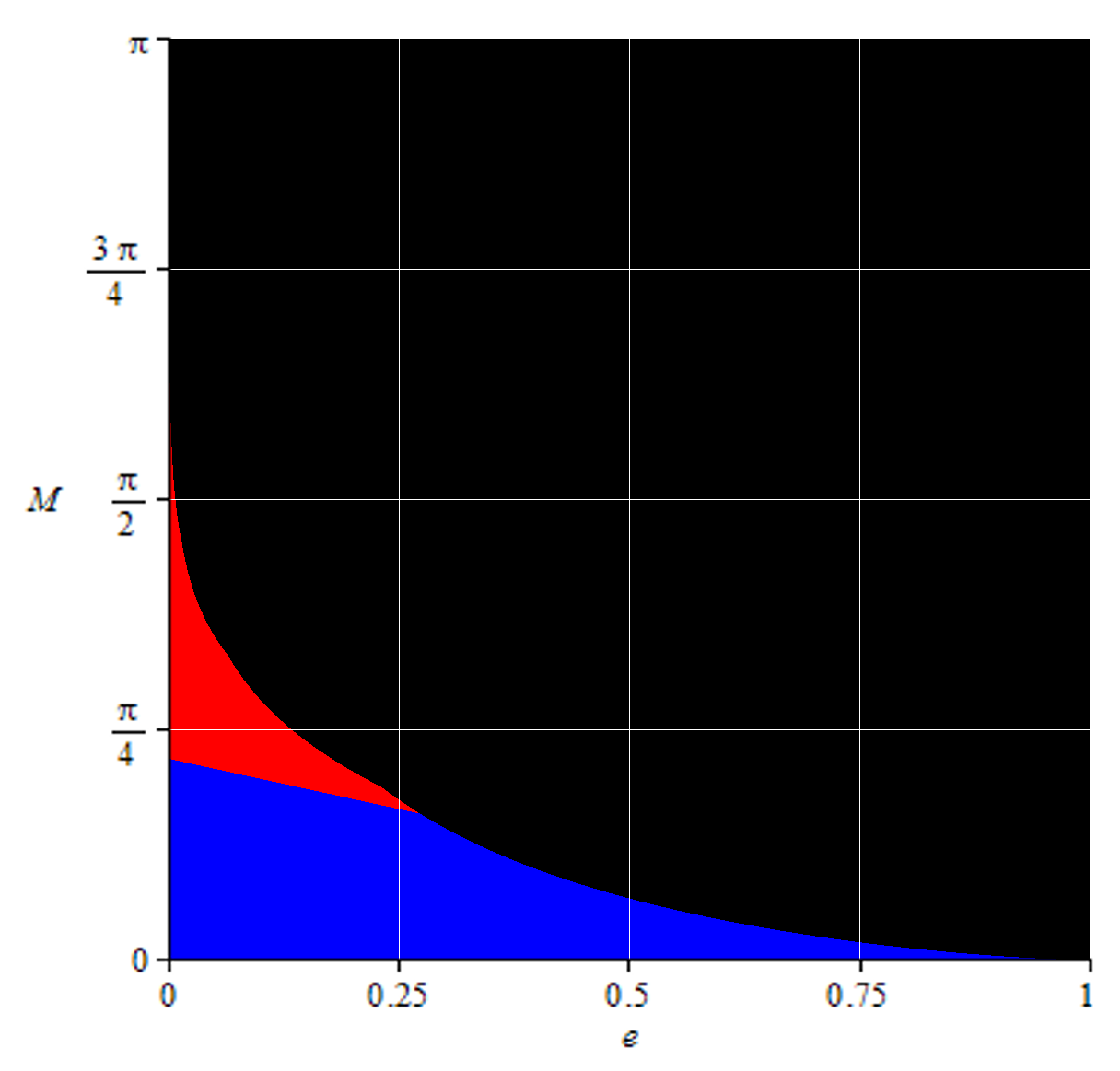}
\includegraphics[width=0.45\textwidth]{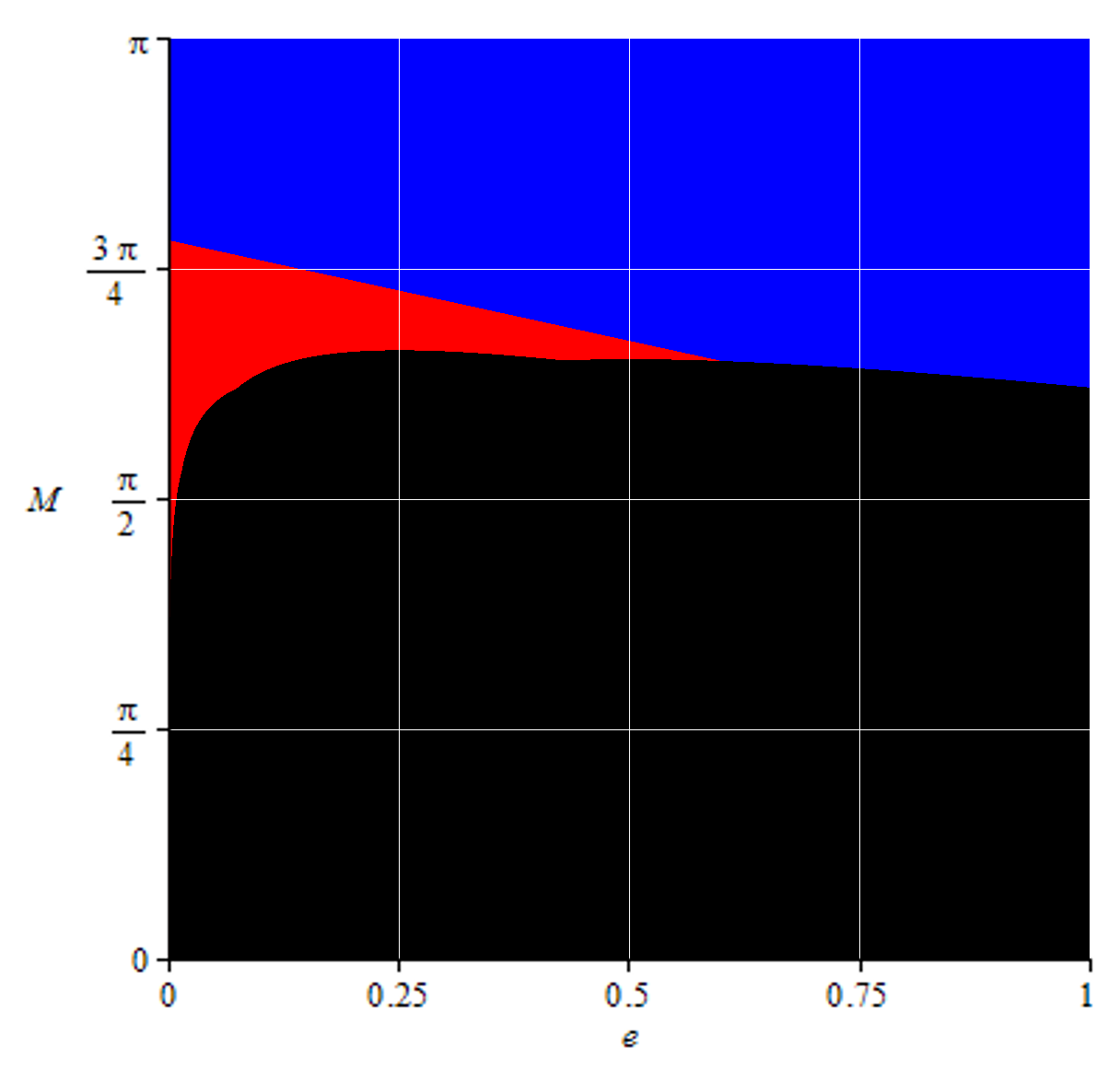}
\caption{The regions of Thm.~\ref{thmE0} and~\ref{thmEpi}
are shown in blue. Red color shows the points where $\tilde{E}=0$ and
$\tilde{E}=\pi$ satisfy the $\alpha$-test for
$f_{e,M}(E)$ that are not in the blue region.}
\label{fig0pi}
\end{figure}

\begin{thm}\label{thmEM}
$\tilde{E}=M$ is an approximate zero of $f_{e,M}(E)$ in the
region
\[
   \left\{0\leq e\leq\frac{1}{2} \right\} \cup \left\{ \frac{2\pi}{3}\leq M\leq\pi \right\}\cup R_2,
\]
where $R_2$ is defined as in Theorem~\ref{thmE0}.
\end{thm}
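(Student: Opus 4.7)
My approach is to compute Smale's $\alpha$-quantity explicitly at $\tilde{E}=M$ and to verify $\alpha < \alpha_0$ in each of the three pieces of the region. Writing $u=e\sin M\geq 0$ and $v=e\cos M$, a direct computation gives $f_{e,M}(M)=-u$, $f'_{e,M}(M)=1-v$, and $|f^{(k)}_{e,M}(M)|=u$ for even $k\geq 2$, $|f^{(k)}_{e,M}(M)|=|v|$ for odd $k\geq 3$. Thus $\beta=u/(1-v)$, and since $\alpha=\sup_k \beta\gamma_k$, it suffices to bound each $\beta\gamma_k$ individually, with $\gamma_k^{k-1}=u/(k!(1-v))$ for even $k$ and $\gamma_k^{k-1}=|v|/(k!(1-v))$ for odd $k$.

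For the region $R_2$, I would follow the pattern of Theorem~\ref{thmE0}: the bounds $u\leq eM$, $|v|\leq e$ and $1-v\geq 1-e$ give $\beta\leq eM/(1-e)$, while Lemma~\ref{lemma_sup} combined with $e/(1-e)\geq 3/8$ (valid for $e\geq 3/11$) pins the sup over odd $k$ of $(e/(k!(1-e)))^{1/(k-1)}$ to the $k=3$ term $\sqrt{e/(6(1-e))}$. One then checks that the even-$k$ terms are dominated by this bound, a routine consequence of the cubic constraint $M^2\leq 6\alpha_0^2(1-e)^3/e$ defining $R_2$. Multiplying yields $\alpha\leq M\sqrt{e^3/(6(1-e)^3)}\leq \alpha_0\, e<\alpha_0$.

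The regions $\{e\leq 1/2\}$ and $\{M\geq 2\pi/3\}$ share the key preliminary bound $\beta\leq 1/\sqrt{3}$. In the first, $\beta$ is maximized over $M$ at $\cos M=e$, giving $e/\sqrt{1-e^2}\leq 1/\sqrt{3}$; in the second, a derivative computation shows $\beta$ is decreasing on $[2\pi/3,\pi]$ with maximum $\sqrt{3}\,e/(2+e)\leq 1/\sqrt{3}$. The $k=2$ bound is then immediate: $\beta\gamma_2=\beta^2/2\leq 1/6<\alpha_0$. For $k=3$, I would write $(\beta\gamma_3)^2=u^2|v|/(6(1-v)^3)$ and maximize it subject to $u^2+v^2\leq e^2$ (with the appropriate sign constraint on $v$), locating the critical point via $12v^2-2v-1=0$ at $v=(1+\sqrt{13})/12\approx 0.384$ and verifying that the maximum value $g(v^*)\approx 0.168$ lies strictly below $6\alpha_0^2\approx 0.177$. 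For $k\geq 4$, the bounds $\beta\leq 1/\sqrt{3}$ and $|v|/(1-v)\leq 1$, combined with Lemma~\ref{lemma_sup}, force each $\beta\gamma_k$ well under $\alpha_0$ (for instance $\beta\gamma_4=\beta^{4/3}/24^{1/3}\leq 1/6$).

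The main obstacle is the $k=3$ estimate in $\{e\leq 1/2\}$: the true maximum of $\alpha$ in this region is close to $\alpha_0$ (numerically around $0.167$ near $(e,M)=(1/2,\pi/3)$), so the joint constraint $u^2+v^2\leq 1/4$ must be tracked carefully throughout the optimization, since treating $u$ and $v$ as independent quantities would already exceed $\alpha_0$.
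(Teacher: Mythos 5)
Your global decomposition (splitting on $\{e\le 1/2\}$, the strip $M\ge 2\pi/3$, and $R_2$, and working pointwise with $u=e\sin M$, $v=e\cos M$) is sound, and your treatment of $R_2$ by directly verifying $\beta\gamma_k<\alpha_0$ under the constraint $M\le\sqrt6\alpha_0(1-e)^{3/2}/\sqrt e$ is a genuine alternative to the paper's argument; the paper instead observes that $M$ lies between $0$ and the exact root $E$ and inherits the conclusion from Theorem~\ref{thmE0}, which is slicker but leans on an unspoken comparison principle, whereas your computation is self-contained.

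However, there is a real gap at $k=5$. You bound $\beta\gamma_2$, $\beta\gamma_3$ and $\beta\gamma_4$, and then assert that for all $k\ge 4$ the crude estimates $\beta\le 1/\sqrt3$ and $|v|/(1-v)\le 1$, together with Lemma~\ref{lemma_sup}, push $\beta\gamma_k$ ``well under'' $\alpha_0$. That assertion fails precisely at $k=5$: these bounds give only
\[
\beta\gamma_5 \;\le\; \frac{1}{\sqrt3}\left(\frac{1}{5!}\right)^{1/4}
\;\approx\; 0.577\cdot 0.302 \;\approx\; 0.174 \;>\; \alpha_0\approx 0.1716,
\]
because equality in $\beta\le 1/\sqrt3$ occurs near $v=1/4$, while equality in $|v|/(1-v)\le1$ occurs at $v=1/2$, and you cannot have both at once. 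Lemma~\ref{lemma_sup} does not help here either, since $\gamma_3$ and $\gamma_5$ involve $|v|$ while $\gamma_4$ involves $u$, so there is no direct domination argument from $\gamma_4$. The remedy is exactly the joint optimization you already carried out for $k=3$: one should maximize $(\beta\gamma_5)^4=u^4|v|/(120(1-v)^5)$ subject to $u^2+v^2\le 1/4$ (and separately for $v<0$ in the strip, where the simpler bound $|v|/(1-v)<1/2$ already suffices). For $v>0$ the critical point solves $20v^2-4v-1=0$, i.e.\ $v^{*}=(1+\sqrt6)/10\approx 0.345$, and the resulting maximum gives $\beta\gamma_5\lesssim 0.14<\alpha_0$. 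The paper is alert to this: in its $\{0\le M\le\pi/2,\,0\le e\le 1/2\}$ case it explicitly singles out both odd indices, reducing the $k=3$ and $k=5$ bounds to the two function-of-$M$ inequalities $\sin^3 M\cos M/(1-\tfrac12\cos M)^3<48\alpha_0^2$ and $\sin^4 M\cos M/(1-\tfrac12\cos M)^5<3840\alpha_0^4$, and verifies each at its maximum. Your proposal needs the same $k=5$ step; with it added, the argument goes through.
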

\begin{proof}
Consider first the strip $M\geq\frac{2\pi}{3}$.
\[
 \beta(f_{e,M},M)=\left|\frac{e\sin(M)}{1-e\cos(M)}\right|\leq \left|\frac{\sin(M)}{1-\cos(M)}\right|=\cot \left(\frac{M}{2} \right)\leq\cot \left(\frac\pi3 \right)=\frac{1}{\sqrt{3}}.
\]
By Lemma~\ref{lemma_sup}, we have that for any even integer $k\geq 2$,
\[
  \left|\frac{e\sin(M)}{k!(1-e\cos(M))}\right|^{\frac{1}{k-1}}\leq\left|\frac{\nicefrac{1}{\sqrt{3}}}{k!}\right|^{\frac{1}{k-1}}\leq\frac{1}{2\sqrt{3}},
\]
and for any odd integer $k\geq 3$,
\[
  \left|\frac{e\cos(M)}{k!(1-e\cos(M))}\right|^{\frac{1}{k-1}}\leq\left|\frac{\nicefrac12}{k!}\right|^{\frac{1}{k-1}}\leq\frac{1}{2\sqrt{3}}.
\]
The last two inequalities together imply $\gamma(f_{e,M},M)\leq\frac{1}{2\sqrt{3}}$ and $\alpha(f_{e,M},M)\leq\nicefrac{1}{6}<\alpha_0$.
This proves that the starter $\tilde{E}=M$ satisfies $\alpha$-test in the strip $M\geq\frac{2\pi}{3}$.

In the region $\left\{ \frac{\pi}{2} \leq M \leq \frac{2\pi}{3}, 0 \leq e \leq \frac12 \right\}$, we have that $\sin(M) \in [\nicefrac{\sqrt3}2,1]$ and $\cos(M) \in [-\nicefrac12,0]$, so
\[
  \beta(f_{e,M},M)=\left| \frac{f(M)}{f'(M)} \right|=\frac{e \sin(M)}{1-e \cos(M)} \leq \frac12.
\]
On the other hand, using Lemma \ref{lemma_sup} gives us
\[
\begin{aligned}
\sup_{\doslineas{k\geq2}{k\;\text{even}}} \left| \frac{f^{(k)}(M)}{k! f'(M)} \right|^{\frac{1}{k-1}} &\leq \sup_{\doslineas{k\geq2}{k\;\text{even}}} \left| \frac{1}{2 k! } \right|^{\frac{1}{k-1}}
=\max \left\{ \frac14, \sup_{\doslineas{k\geq4}{k\;\text{even}}} \left| \frac{1}{2 k! } \right|^{\frac{1}{k-1}} \right\}\\
&=\max \left\{ \frac14, \frac1{\sqrt[3]{48}} \right\}=\frac1{\sqrt[3]{48}} \approx 0.2752,
\end{aligned}
\]
\[
\begin{aligned}
\sup_{\doslineas{k\geq3}{k\;\text{odd}}} \left| \frac{f^{(k)}(M)}{k! f'(M)} \right|^{\frac{1}{k-1}} &\leq \sup_{\doslineas{k\geq3}{k\;\text{odd}}} \left| \frac{1}{4 k! } \right|^{\frac{1}{k-1}}
=\max \left\{ \frac1{\sqrt{24}}, \sup_{\doslineas{k\geq4}{k\;\text{even}}} \left| \frac{1}{4 k! } \right|^{\frac{1}{k-1}} \right\}\\
&=\max \left\{ \frac1{\sqrt{24}}, \frac1{\sqrt[4]{480}} \right\}=\frac1{\sqrt[4]{480}} \approx 0.2136.
\end{aligned}
\]
Therefore, $\gamma(f_{e,M},M) \leq \frac1{\sqrt[3]{48}}$ and the $\alpha$-test holds because $\frac12\frac1{\sqrt[3]{48}} < \alpha_0$.

In the region $\left\{ 0 \leq M \leq \frac{\pi}{2}, 0 \leq e \leq \frac12 \right\}$,
\begin{equation}\label{eq-in1}
\frac{e \sin(M)}{1-e \cos(M)} \leq \frac{\frac12 \sin(M)}{1-\frac12 \cos(M)}\leq\frac{1}{\sqrt{3}}
\end{equation}
and using Lemma \ref{lemma_sup} we obtain that
\[
\begin{aligned}
\sup_{\doslineas{k\geq2}{k\;\text{even}}} \left| \frac{f^{(k)}(M)}{k! f'(M)} \right|^{\frac{1}{k-1}} &\leq
\max \left\{g_2,g_4,\sup_{\doslineas{k\geq6}{k\;\text{even}}} \left| \frac{\frac12 \sin(M)}{k!(1-\frac12 \cos(M))} \right|^{\frac{1}{k-1}} \right\}\\
&\leq \max \left\{g_2,g_4,\sup_{\doslineas{k\geq6}{k\;\text{even}}} \left| \frac{1}{k!} \right|^{\frac{1}{k-1}} \right\} = \max \left\{g_2,g_4,\sqrt[5]{\frac{1}{6!}} \right\},
\end{aligned}
\]
where $g_k=\left(\frac{\frac12 \sin(M)}{k!(1-\frac12 \cos(M))} \right)^{\frac1{k-1}}$ for $k=2,4$.
Similarly,
\[
\begin{aligned}
\sup_{\doslineas{k\geq3}{k\;\text{odd}}} \left| \frac{f^{(k)}(M)}{k! f'(M)} \right|^{\frac{1}{k-1}} &\leq
\max \left\{g_3,g_5,\sup_{\doslineas{k\geq7}{k\;\text{odd}}} \left| \frac{\frac12 \cos(M)}{k!(1-\frac12 \cos(M))} \right|^{\frac{1}{k-1}} \right\}\\
&\leq \max \left\{ g_3,g_5, \sup_{\doslineas{k\geq7}{k\;\text{odd}}} \left| \frac{1}{k!} \right|^{\frac{1}{k-1}} \right\}
=\max \left\{ g_3,g_5, \sqrt[6]{\frac{1}{7!}} \right\},
\end{aligned}
\]
where $g_k=\left(\frac{\frac12 \cos(M)}{k!(1-\frac12 \cos(M))} \right)^{\frac1{k-1}}$ for $k=3,5$.
Therefore,
\[
\gamma(f_{e,M},M) \leq \max \left\{ g_2,g_3,g_4,g_5,\sqrt[5]{\frac{1}{6!}}, \sqrt[6]{\frac{1}{7!}} \right\}=\max \left\{ g_2,g_3,g_4,g_5,\sqrt[5]{\frac{1}{6!}} \right\}.
\]
As an immediate consequence of the second inequality in~\eqref{eq-in1}, we get $g_2 <g_4$,
$\frac{\frac12 \sin(M)}{1-\frac12 \cos(M)} g_4 <\alpha_0$ and $\frac{\frac12 \sin(M)}{1-\frac12 \cos(M)} \sqrt[5]{\frac{1}{6!}} <\alpha_0$.
It remains to see that $\frac{\frac12 \sin(M)}{1-\frac12 \cos(M)} g_k \leq \alpha_0$ for $k=3,5$, which is equivalent to proving
\[
  \frac{\sin^3(M) \cos(M)}{\left( 1-\frac12 \cos(M) \right)^3} <48 \alpha_0^2 \approx 1.41, \text{ and }
  \frac{\sin^4(M) \cos(M)}{\left( 1-\frac12 \cos(M) \right)^5} <3840 \alpha_0^4 \approx 3.33.
\]
In both cases, the left-hand side function has a maximum and the inequalities are true at it.

Finally, note that $f_{e,M}(M)=-e\sin M\leq 0$ and $f_{e,M}$ is increasing, so $0\leq M\leq E$, where
$E$ represents the exact solution of Kepler's equation. In particular, $M$ is always closer to $E$ than
$0$, hence for any point in $R_2$, the starter $\tilde{E}=M$ gives an approximate solution.
\end{proof}

\begin{thm}\label{thmEM1-e}
$\tilde{E}=\frac{M}{1-e}$ is an approximate zero of $f_{e,M}(E)$ in the
region $R_5\cup R_6$, where
\[
   \begin{aligned}
   R_5 &= \left\{0\leq M< \min \left\{ \sqrt[4]{12\alpha_0}\frac{(1-e)^{\nicefrac32}}{e^{\nicefrac12}}, \sqrt[3]{24\alpha_0}\frac{(1-e)^{\nicefrac43}}{e^{\nicefrac13}}  \right\},\, 0\leq e\leq \frac{3}{11} \right\},\\
   R_6 &= \left\{ 0\leq M< \sqrt[4]{12\alpha_0}  \frac{(1-e)^{\nicefrac32}}{e^{\nicefrac12}}  ,\, \frac{3}{11} \leq e<1  \right\}.
   \end{aligned}
 \]
This region contains the region of Theorem~\ref{thmE0}.
\end{thm}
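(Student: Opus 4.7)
My plan is to evaluate Smale's invariants at $\bar E = M/(1-e)$, bound them via elementary estimates, and read off $R_5\cup R_6$ from the contributions of the indices $k=2$ and $k\geq 3$ separately. The key preliminary observation is that $M=(1-e)\bar E$, which gives
\[
  f_{e,M}(\bar E) = e(\bar E-\sin\bar E),\qquad f_{e,M}'(\bar E) = 1-e\cos\bar E \geq 1-e,
\]
and $|f_{e,M}^{(k)}(\bar E)|$ equals $e|\sin\bar E|$ for even $k\geq 2$ or $e|\cos\bar E|$ for odd $k\geq 3$. Combined with $\bar E - \sin\bar E \leq \bar E^3/6$, this yields $\beta(f_{e,M},\bar E)\leq eM^3/(6(1-e)^4)$.

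For $k=2$, I would use the refined inequality $|\sin\bar E|\leq\bar E$ to obtain $\gamma_2\leq eM/(2(1-e)^2)$, so
\[
  \beta(f_{e,M},\bar E)\cdot\gamma_2 \leq \frac{e^2 M^4}{12(1-e)^6}.
\]
Requiring this to be $<\alpha_0$ is exactly the inequality $M<\sqrt[4]{12\alpha_0}(1-e)^{3/2}/\sqrt e$ defining $R_6$ and the first branch of $R_5$.

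For $k\geq 3$, I would apply $|\sin|,|\cos|\leq 1$ to get $\gamma_k\leq (e/(k!(1-e)))^{1/(k-1)}$ and then split on $e$. If $e\geq 3/11$ then $e/(1-e)\geq 3/8 = 3!/4^2$, so Lemma~\ref{lemma_sup} with $n=3$ forces this sequence to be decreasing for $k\geq 3$, with supremum $\sqrt{e/(6(1-e))}$ at $k=3$; the resulting condition $\beta\gamma_3<\alpha_0$ reads $M<\sqrt6\,\alpha_0^{1/3}(1-e)^{3/2}/\sqrt e$, which is strictly weaker than the $R_6$ inequality (after raising both constants to the $12$th power the comparison reduces to $27\alpha_0>1$). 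If $e\leq 3/11$ then $e/(1-e)\leq 3/8$ forces $(e/(k!(1-e)))^{1/(k-1)}\leq(3/(8k!))^{1/(k-1)}\leq 1/4$ for all $k\geq 3$, so $\beta\gamma_k\leq\beta/4<\alpha_0$ becomes $M<\sqrt[3]{24\alpha_0}(1-e)^{4/3}/e^{1/3}$, which is the second branch of $R_5$.

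The containment $R_1\cup R_2\subset R_5\cup R_6$ then reduces to two constant comparisons: $\sqrt6\,\alpha_0<\sqrt[4]{12\alpha_0}$ (equivalently $3\alpha_0^3<1$) gives $R_2\subset R_6$, and substituting $M=4\alpha_0(1-e)$ into each of the two $R_5$ bounds and using $e/(1-e)\leq 3/8$ yields two elementary numerical inequalities that imply $R_1\subset R_5$. The main subtlety is recognizing that the $k=2$ analysis must exploit the sharp $|\sin\bar E|\leq\bar E$ rather than $|\sin\bar E|\leq 1$; with the cruder estimate the $M^4$-type constraint defining $R_6$ would degrade to an $M^3$-type one and the region would shrink substantially.
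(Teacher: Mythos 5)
Your proposal is correct and follows essentially the same route as the paper: the same bound $\beta\le eM^3/(6(1-e)^4)$ via $\bar E-\sin\bar E\le\bar E^3/6$, the same refinement $|\sin\bar E|\le\bar E$ for the $k=2$ term giving the $R_6$ condition, the same use of Lemma~\ref{lemma_sup} with the $e\gtrless 3/11$ split for the $k\ge 3$ supremum, and the same constant comparisons for the containment $R_1\cup R_2\subseteq R_5\cup R_6$.
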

\begin{proof}
In this case we have
\[
  |f(\tilde{E})|=e\left|\frac{M}{1-e}-\sin\left(\frac{M}{1-e}\right)\right|\leq \frac{eM^3}{6(1-e)^3},
\]
$|f'(\tilde{E})|\geq 1-e$ and $|f^{(k)}(\tilde{E})|\leq e$ for all $k\geq 2$. Besides,
\[
\gamma \left( f_{e,M}, \frac{M}{1-e} \right) \leq \max \left\{ \frac{e  \frac{M}{1-e}}{2(1-e)},\, \sup_{k\geq 3}\left|\frac{e}{k!(1-e)}\right|^{\frac{1}{k-1}}  \right\}.
\]
In particular, Smale's $\alpha$-test is satisfied if
\[
\frac{M^4 e^2}{12(1-e)^6} < \alpha_0 \ \ \text{ and } \ \  \frac{eM^3}{6(1-e)^4}\sup_{k\geq 3}\left|\frac{e}{k!(1-e)}\right|^{\frac{1}{k-1}}<\alpha_0.
\]
The first condition is equivalent to $M<\frac{\sqrt[4]{12\alpha_0} (1-e)^{\nicefrac32}}{e^{\nicefrac12}}$, which
is true in both $R_5$ and $R_6$. The second inequality needs to be discussed depending on the value
of $e$.

When $e\in[\nicefrac3{11},1)$, we have by Lemma~\ref{lemma_sup} that
\[
  \sup_{k\geq 3}\left|\frac{e}{k!(1-e)}\right|^{\frac{1}{k-1}}=\sqrt{\frac{e}{6(1-e)}},
\]
so the second inequality becomes $M<\sqrt6 \sqrt[3]{\alpha_0}  \frac{(1-e)^{3/2}}{e^{1/2}}$,
which is automatically true in $R_6$ since $\sqrt6 \sqrt[3]{\alpha_0} > \sqrt[4]{12\alpha_0}$.

In the other case, i.e. when $e\in[0,\nicefrac{3}{11}]$, we have $\frac{e}{1-e}\leq \nicefrac38$. In particular, we can estimate the supremum from above as follows:
\[
  \sup_{k\geq 3}\left|\frac{e}{k!(1-e)}\right|^{\frac{1}{k-1}}\leq
  \sup_{k\geq 3}\left|\frac{3}{8k!}\right|^{\frac{1}{k-1}}=
  \frac14,
\]
where we have used Lemma \ref{lemma_sup}.
Therefore, in the case $e\in[0,\nicefrac{3}{11}]$, the $\alpha$-test is satisfied when
\[
M<\frac{\sqrt[4]{12\alpha_0} (1-e)^{\nicefrac32}}{e^{\nicefrac12}}  \ \ \text{ and } \ \ M < \sqrt[3]{24 \alpha_0} \frac{(1-e)^{\nicefrac43}}{e^{\nicefrac13}},
\]
which is the definition of the region $R_5$.

Finally, the inclusion $R_2 \subseteq R_6$ follows immediately from $\sqrt6 \alpha_0 < \sqrt[4]{12 \alpha_0}$ and $R_1 \subseteq R_5$ from the fact that $4 \alpha_0 (1-e) < \sqrt[4]{12 \alpha_0} \frac{(1-e)^{\nicefrac32}}{e^{\nicefrac12}}$ and $4 \alpha_0 (1-e) < \sqrt[3]{24 \alpha_0} \frac{(1-e)^{\nicefrac43}}{e^{\nicefrac13}}$ for all $e \in [0,\nicefrac3{11}]$.
\end{proof}

\begin{figure}[ht!]
\includegraphics[width=0.45\textwidth]{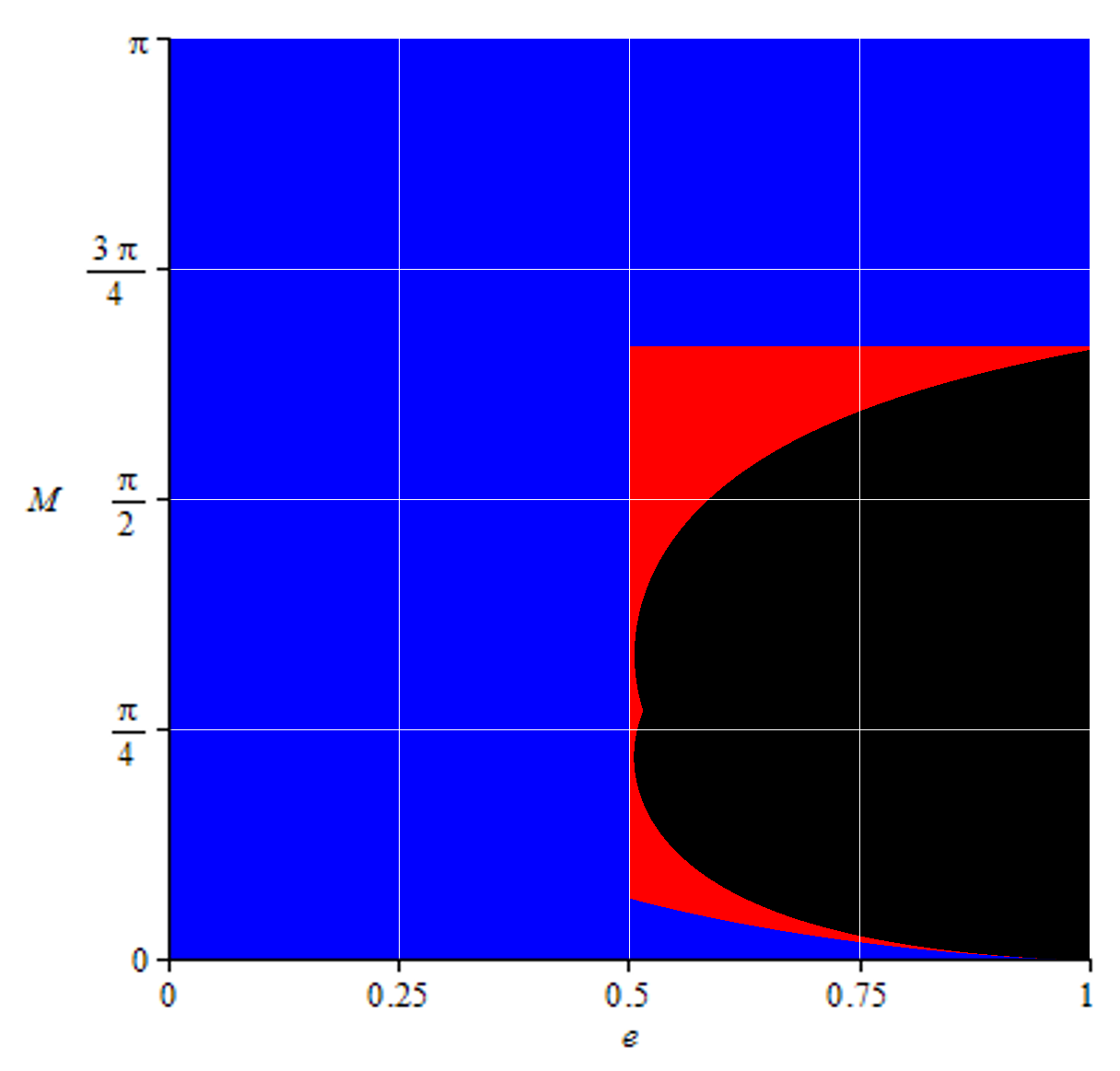}
\includegraphics[width=0.45\textwidth]{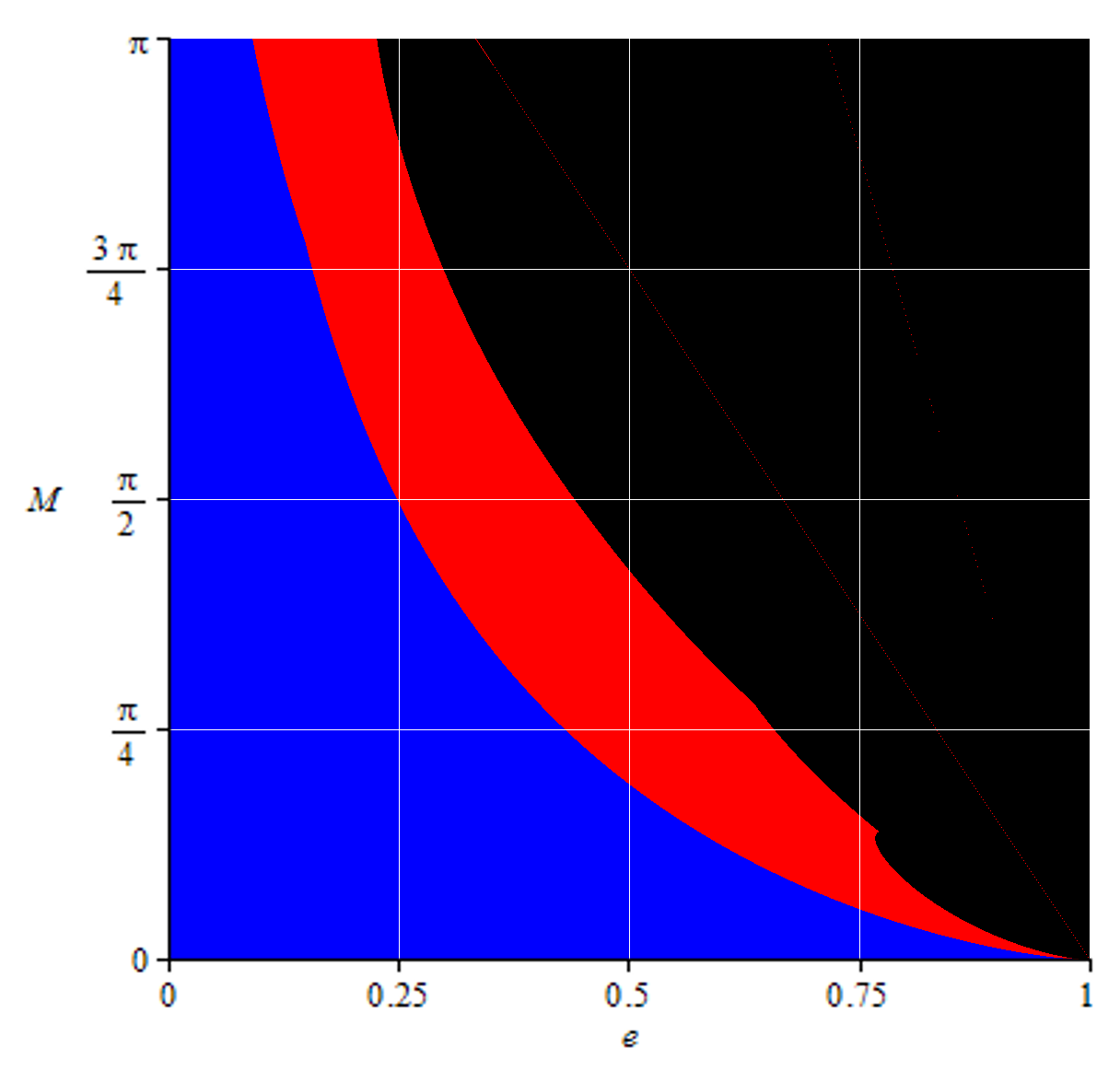}
\caption{The regions of Thm.~\ref{thmEM},  and~\ref{thmEM1-e}
are shown in blue. Red color shows the points where $\tilde{E}=M$ and $\tilde{E}=\frac{M}{1-e}$ satisfy the $\alpha$-test for
$f_{e,M}(E)$ that are not in the blue region.}
\label{figMM1-e}
\end{figure}

\begin{thm}\label{thmEcubica}
The exact solution of the cubic equation $\tilde{E}(1-e)+e\frac{\tilde{E}^3}{6}-M=0$ is an approximate
zero of $f_{e,M}(E)$ in the entire region $[0,1)\times[0,\pi]$.
\end{thm}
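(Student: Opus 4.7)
My plan is to exploit the defining cubic to simplify $f_{e,M}(\tilde{E})$ and its derivatives directly. Substituting $M = (1-e)\tilde{E} + e\tilde{E}^{3}/6$ gives
\[
f_{e,M}(\tilde{E}) = e\bigl(\tilde{E} - \tfrac{\tilde{E}^{3}}{6} - \sin\tilde{E}\bigr) = -e\sum_{j\geq 2}\frac{(-1)^{j}\tilde{E}^{2j+1}}{(2j+1)!},
\]
whose terms have strictly decreasing absolute values for $\tilde{E}\in[0,\pi]$ (the ratio $\tilde{E}^{2}/((2j+2)(2j+3))$ is at most $\pi^{2}/42<1$ for $j\geq 2$), so the alternating-series estimate yields $|f_{e,M}(\tilde{E})|\leq e\tilde{E}^{5}/120$. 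The analogous two-term Taylor bound for $\cos$ gives
\[
|f'_{e,M}(\tilde{E})| = 1-e\cos\tilde{E} \geq (1-e) + \tfrac{e\tilde{E}^{2}}{2}\bigl(1-\tfrac{\tilde{E}^{2}}{12}\bigr),
\]
whose two summands are both nonnegative because $\tilde{E}^{2}\leq\pi^{2}<12$, and either of which can serve as a lower bound. Since $|f^{(k)}_{e,M}(\tilde{E})|\leq e$ for all $k\geq 2$, Lemma~\ref{lemma_sup} applied to $x = e/|f'_{e,M}(\tilde{E})|$ yields $\gamma(f_{e,M},\tilde{E}) \leq e/(2|f'_{e,M}(\tilde{E})|)$ whenever $e/|f'_{e,M}(\tilde{E})|\geq 2/3$, and in the complementary regime the supremum is attained at a small index $k\geq 3$ and can be bounded term by term.

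The task then reduces to verifying
\[
\alpha(f_{e,M},\tilde{E}) \leq \frac{e\tilde{E}^{5}}{120\,|f'_{e,M}(\tilde{E})|}\cdot\sup_{k\geq 2}\left(\frac{e}{k!\,|f'_{e,M}(\tilde{E})|}\right)^{\!\frac{1}{k-1}} < \alpha_0
\]
throughout $[0,1)\times[0,\pi]$. I would split the analysis according to which summand dominates the lower bound for $|f'_{e,M}(\tilde{E})|$: when $1-e$ dominates (small $\tilde{E}$, or $e$ bounded away from $1$) both $\beta$ and $\gamma$ are directly small; near the corner $e\to 1$, $M\to 0$ the other summand takes over, so $\beta$ scales like $\tilde{E}^{3}$ while $\gamma$ acquires a factor of $\tilde{E}^{-2}$ through the same denominator, producing an overall $O(\tilde{E})$ bound.

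The main obstacle I foresee is sharpness of this final inequality. Numerically, $\alpha(f_{e,M},\tilde{E})$ climbs to roughly $0.15$ when $e$ is close to $1$ and $\tilde{E}$ is close to $\pi$, uncomfortably near $\alpha_0\approx 0.172$. To close the gap I expect to need the sharper two-term bound $e\tilde{E}^{5}/120 - e\tilde{E}^{7}/5040$ for $|f_{e,M}(\tilde{E})|$ in part of the region, together with the exact maximizer (either $k=2$ or $k=3$, as identified by Lemma~\ref{lemma_sup}) for the supremum defining $\gamma$, rather than any uniform upper bound. Verifying the resulting explicit inequality simultaneously in $e$ and $\tilde{E}$ across the whole strip is the technically delicate step.
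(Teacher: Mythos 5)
Your proposal reproduces the skeleton of the paper's argument: the alternating-series bound $|f_{e,M}(\tilde{E})|\leq e\tilde{E}^{5}/120$, the observation $|f^{(k)}_{e,M}(\tilde{E})|\leq e$ for $k\geq 2$, and an appeal to Lemma~\ref{lemma_sup} to locate the supremum defining $\gamma$. All of that is correct and is exactly the paper's starting point. The difficulty is that you stop precisely where the paper's proof begins its real work: you explicitly acknowledge that for $\tilde{E}$ in the upper part of its range the simple bound $\beta\cdot\gamma\leq \frac{e\tilde{E}^{5}}{240\,|f'|^{2}}$ (or whichever small-$k$ term wins) is too crude, and you only ``expect to need'' a sharper two-term Taylor estimate and an exact maximizer, without computing anything. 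That anticipated step is not a minor technicality. For $\tilde{E}>\pi/2$ the bare bounds give $\alpha$ well above $\alpha_0$, and closing the gap forces one to exploit that $\tilde{E}$ cannot simultaneously be large and $e$ close to $1$: since $M\leq\pi$, the cubic gives $\tilde{E}\leq\pi$ at $e=0$ but forces $\tilde{E}\leq\sqrt[3]{6\pi}\approx 2.66$ as $e\to 1$. The paper makes this precise by restricting to $M=\pi$ (where $g_{2},g_{3},g_{4}$ are maximal), proving the explicit upper bound
\[
\tilde{E}(e,\pi)\leq\sqrt[3]{\tfrac{6(\pi-(1-e)\sqrt[3]{6\pi})}{e}},
\]
and then verifying the three inequalities $\frac{e\tilde{E}^{5}}{120(1-e\cos\tilde{E})}g_{i}<\alpha_0$ ($i=2,3,4$) on several sub-intervals of $e$, each with its own explicit numerical bound on $\tilde{E}(e,\pi)$ and on $\cos\tilde{E}(e,\pi)$. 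None of that bookkeeping appears in your write-up.

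A secondary concern: your lower bound $|f'_{e,M}(\tilde{E})|\geq(1-e)+\tfrac{e\tilde{E}^{2}}{2}(1-\tfrac{\tilde{E}^{2}}{12})$ degenerates badly near $\tilde{E}=\pi$ (the factor $1-\pi^{2}/12\approx 0.18$ throws away most of the true value $1-e\cos\tilde{E}\approx 1+e$). The paper instead keeps $1-e\cos\tilde{E}$ exactly for $\tilde{E}>\pi/2$ and only replaces it by $\tfrac{4}{\pi^{2}}\tilde{E}^{2}$ when $\tilde{E}\leq\pi/2$. Your uniform polynomial lower bound would make the delicate inequalities strictly harder to verify, not easier, so even the ``sharpened'' version you gesture at would need a different split. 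In short: the outline is sound and parallels the paper, but the claim is not proved; the hard verification for $\tilde{E}>\pi/2$ is missing, and the route you sketch for it is unlikely to work as stated.
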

\begin{proof}
First, note that the derivative of the left-hand side of the equation is $(1-e)+e\tilde{E}^2/2>0$, so
the expression is increasing. This means that the cubic has only one real root. Moreover, the
values of the cubic at $0$ and $\pi$ are $-M\leq 0$ and $\pi(1-e)+e\frac{\pi^3}{6}-M\geq\pi-M\geq 0$ respectively, so the real root $\tilde{E}$ must be in $[0,\pi]$.
In particular, we have that $\tilde{E}< \sqrt{42}$, so
\[
|f(\tilde{E})| =|\tilde{E}-e \sin(\tilde{E})-M|=\left|\tilde{E}(1-e)+e \left( \frac{\tilde{E}^3}{3!}-\frac{\tilde{E}^5}{5!}+\cdots \right)-M \right| \leq e\frac{\tilde{E}^5}{120}.
\]
Let us now consider two different cases depending on the value of $\tilde{E}$.

If $\tilde{E}\leq \frac{\pi}2$, we have that  $f'(\tilde{E}) \geq 1-\cos(\tilde{E})=2 \sin^2(\frac{\tilde{E}}{2})\geq \frac{4}{\pi^2} \tilde{E}^2$ and
\[
\gamma(f_{e,M},\tilde{E}) \leq \sup_{k\geq2} \left| \frac{1}{k!(1-\cos(\tilde{E})}\right|^{\frac{1}{k-1}} \leq \sup_{k\geq2} \left| \frac{\pi^2}{4 k! \tilde{E}^2  }\right|^{\frac{1}{k-1}} =\frac{\pi^2}{8 \tilde{E}^2}
\]
by Lemma~\ref{lemma_sup}. Therefore, the $\alpha$-test follows if we prove
\[
\frac{e\frac{\tilde{E}^5}{120}}{ \frac{4}{\pi^2} \tilde{E}^2} \frac{\pi^2}{8 \tilde{E}^2} <\frac{\pi^4 \tilde{E}}{3840} < \alpha_0 \Leftrightarrow \tilde{E} <\frac{3840\alpha_0}{\pi^4} \approx 6.76,
\]
which is always true in this region.

If $\tilde{E}> \frac{\pi}2$, then $ \gamma(f_{e,M},\tilde{E}) =\max \{ g_2,g_3,g_4,g_5 \}$, where
\[
\begin{aligned}
g_2 &= \frac{1 }{2 (1-e \cos(\tilde{E}))}, \ g_3=\sqrt{\frac{|\cos(\tilde{E})|}{6(1-e\cos(\tilde{E}))}}, \\
g_4 &=\sup_{\doslineas{k\geq4}{k\;\text{even}}} \left| \frac{1}{k!(1-e \cos(\tilde{E}))}\right|^{\frac{1}{k-1}}= \sqrt[3]{\frac{1}{24(1-e \cos(\tilde{E}))}},\\
g_5 &=\sup_{\doslineas{k\geq5}{k\;\text{odd}}} \left| \frac{1}{k!(1-e \cos(\tilde{E}))}\right|^{\frac{1}{k-1}} =  \sqrt[4]{\frac{1}{120(1-e \cos(\tilde{E}))}}
\leq g_4.
\end{aligned}
\]
Therefore, the $\alpha$-test is satisfied if $\frac{e\tilde{E}^5}{120 (1-e \cos(\tilde{E}))} g_i < \alpha_0$ for $i=2,3,4$.

Since $g_2$, $g_3$ and $g_4$, are increasing in $M$, it is enough to prove the inequalities when
$M=\pi$. Moreover, $\tilde{E}(e,\pi)$ is decreasing, so $\tilde{E}(e,\pi) \in [\sqrt[3]{6 \pi},\pi]$
and $1-e\cos(\tilde{E}(e,\pi)) \geq 1-e\cos(\sqrt[3]{6\pi})$.

We also have that $\pi =e \frac{\tilde{E}^3(e,\pi)}{6} +(1-e)\tilde{E}(e,\pi) \geq e \frac{\tilde{E}^3(e,\pi)}{6} +(1-e)\sqrt[3]{6\pi}$, hence
\begin{equation}\label{eq-Epi}
\tilde{E}(e,\pi) \leq \sqrt[3] {\frac{6 \left(\pi-(1-e)\sqrt[3]{6\pi} \right)}{e}}.
\end{equation}

Let us now study the three different cases.

When $i=2$, it is enough to prove that
\[
\frac{e\tilde{E}^5}{120 (1-e \cos(\tilde{E}))} g_2
<\frac{e\tilde{E}(e,\pi)^5}{240 (1-e \cos(\sqrt[3]{6\pi}))^2}  < \alpha_0,
\]
which is true using that $\tilde{E} \leq \pi$ in $e \in [0,0.17]$, $\tilde{E}(e,\pi) \leq 2.92$ in $e \in [0.17,0.3]$, $\tilde{E}(e,\pi) \leq 2.84$ in $e \in [0.3,0.4]$ and Eq.~\eqref{eq-Epi} in $e\in[0.4,1]$.

When $i=3$, it suffices to show that
\[
\frac{e\tilde{E}^5}{120 (1-e \cos(\tilde{E}))} g_3
<\frac{e\tilde{E}^5(e,\pi) \sqrt{ |cos(\tilde{E}(e,\pi))| }}{120\sqrt{6} (1-e \cos(\sqrt[3]{6\pi}))^{\nicefrac32} }   < \alpha_0,
\]
which is true using that
\begin{itemize}
 \item $\tilde{E} \leq \pi$ and $\sqrt{|cos(\tilde{E}(e,\pi)|} \leq 1$ in $e \in [0,0.2]$,
 \item Eq.~\eqref{eq-Epi} and $\sqrt{|cos(\tilde{E}(e,\pi)|} \leq 1$ in $e \in [0.2,0.7]$,
 \item Eq.~\eqref{eq-Epi} and $\sqrt{|cos(\tilde{E}(e,\pi)|} <0.91 $ in $e\in[0.7,1]$.
\end{itemize}

Lastly, the case $i=4$ follows by using $\tilde{E} \leq \pi$ in $e \in [0,0.2]$ and
Eq.~\eqref{eq-Epi} in $e\in[0.2,1]$.
\end{proof}

\section{Numerical comparison of classical starters via $\alpha$-theory}\label{sec-num-starters}

We tested numerically the $\alpha$-condition on a fine grid (dividing each axis in $1000$ points) for the starters $S_2,\ldots,S_9$, defined in \cite{OG86}, and the improved $S_7$ starter obtained in \cite[Prop.~1]{CEMR}, which we denote $S_{CEMR}$.
Note that none of the starters produce approximate zeros near the corner $(1,0)$.
\begin{figure}[ht!]
\includegraphics[width=0.32\textwidth]{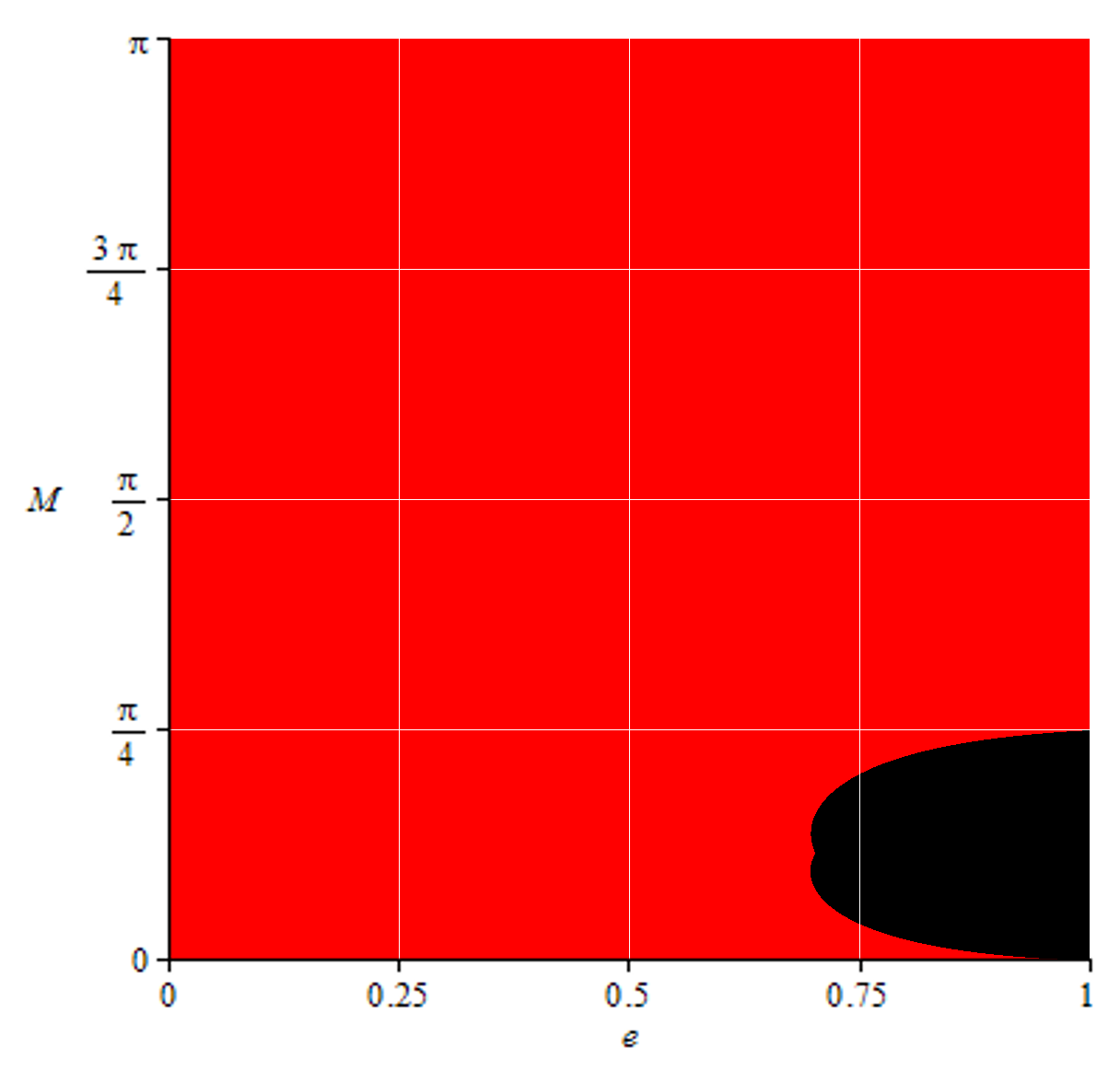}
\includegraphics[width=0.32\textwidth]{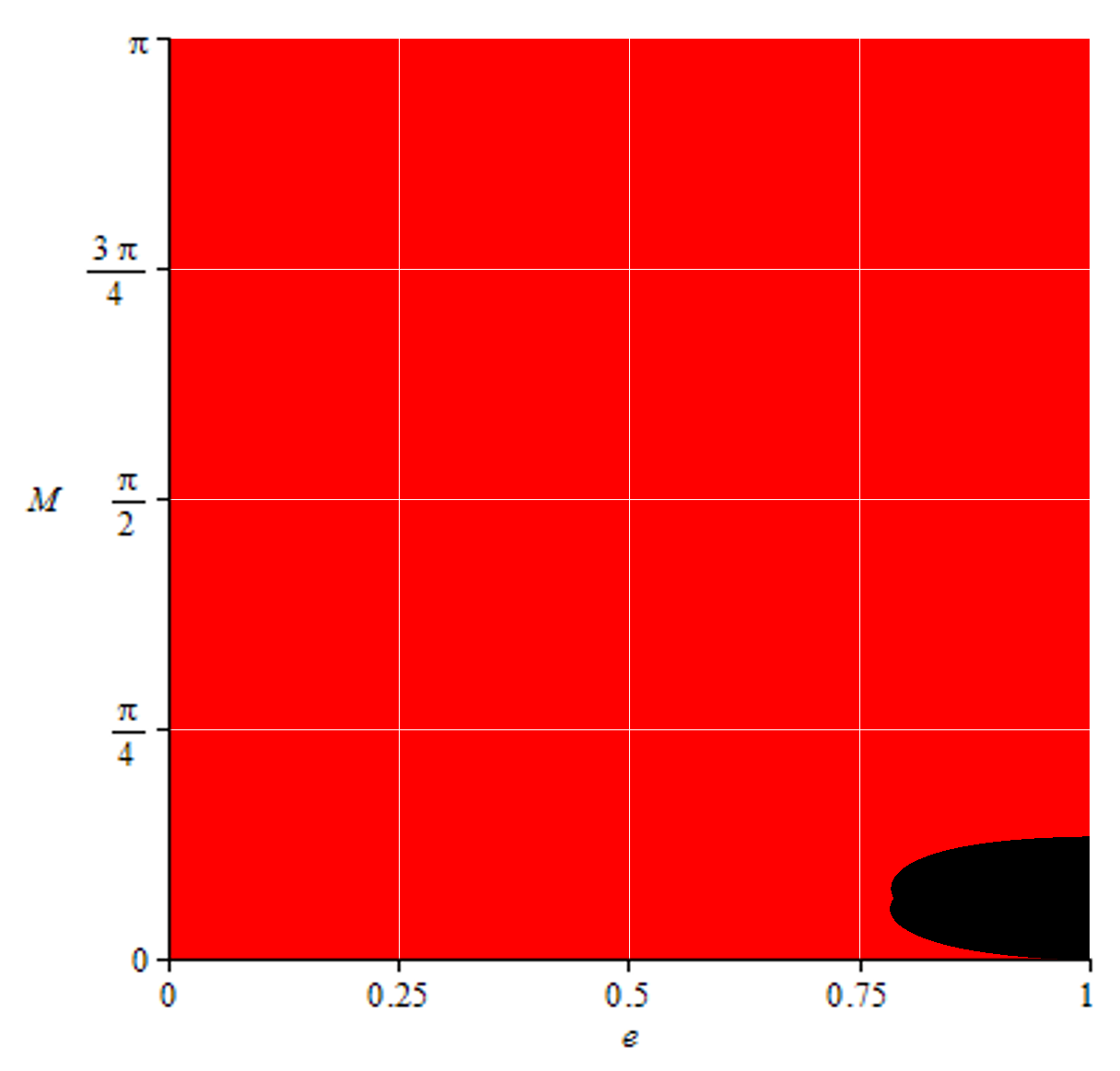}
\includegraphics[width=0.32\textwidth]{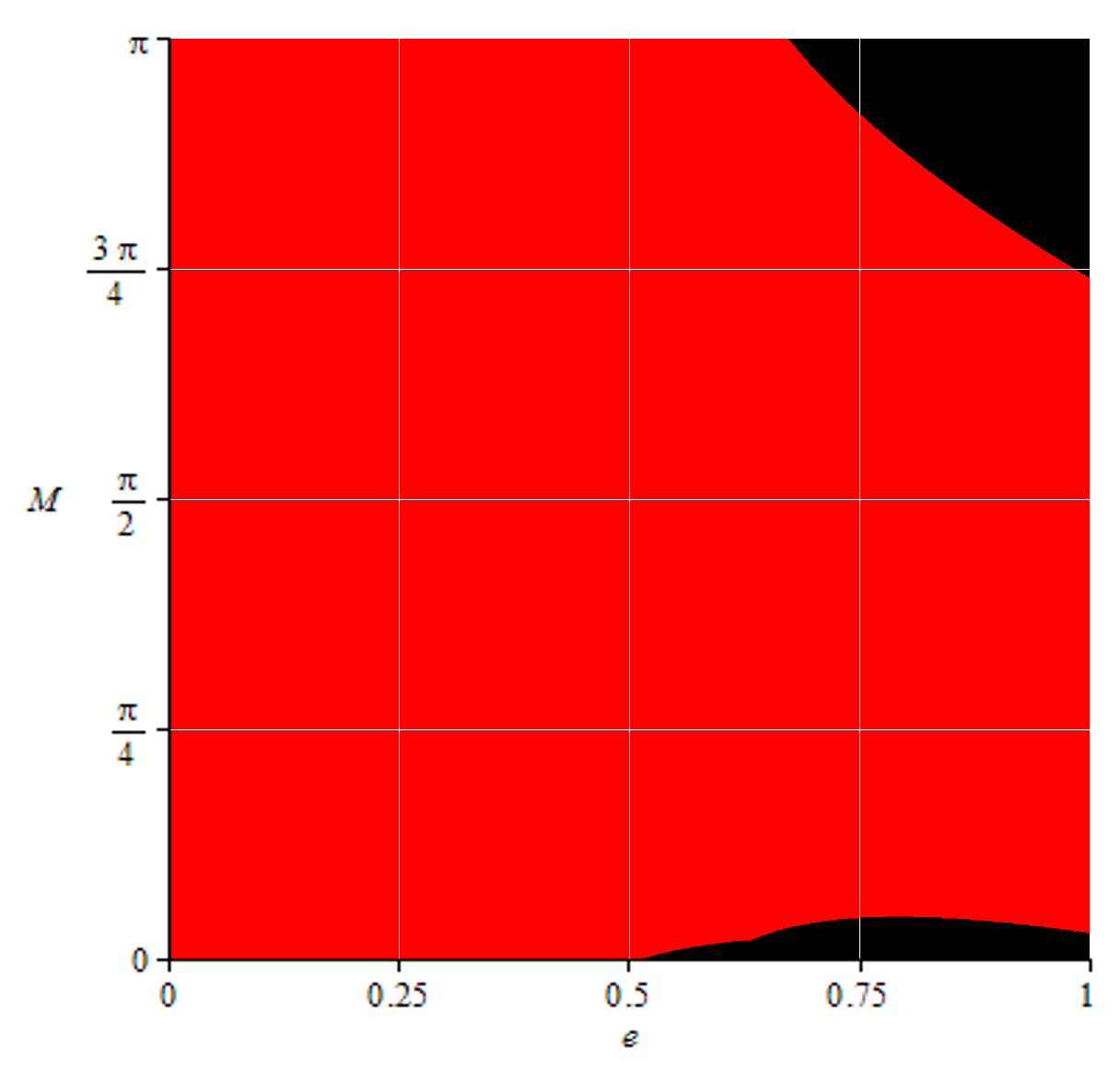}
\caption{The regions of $S_2$, $S_3$ and $S_4$.}
\label{figS234}
\end{figure}

\begin{figure}[ht!]
\includegraphics[width=0.32\textwidth]{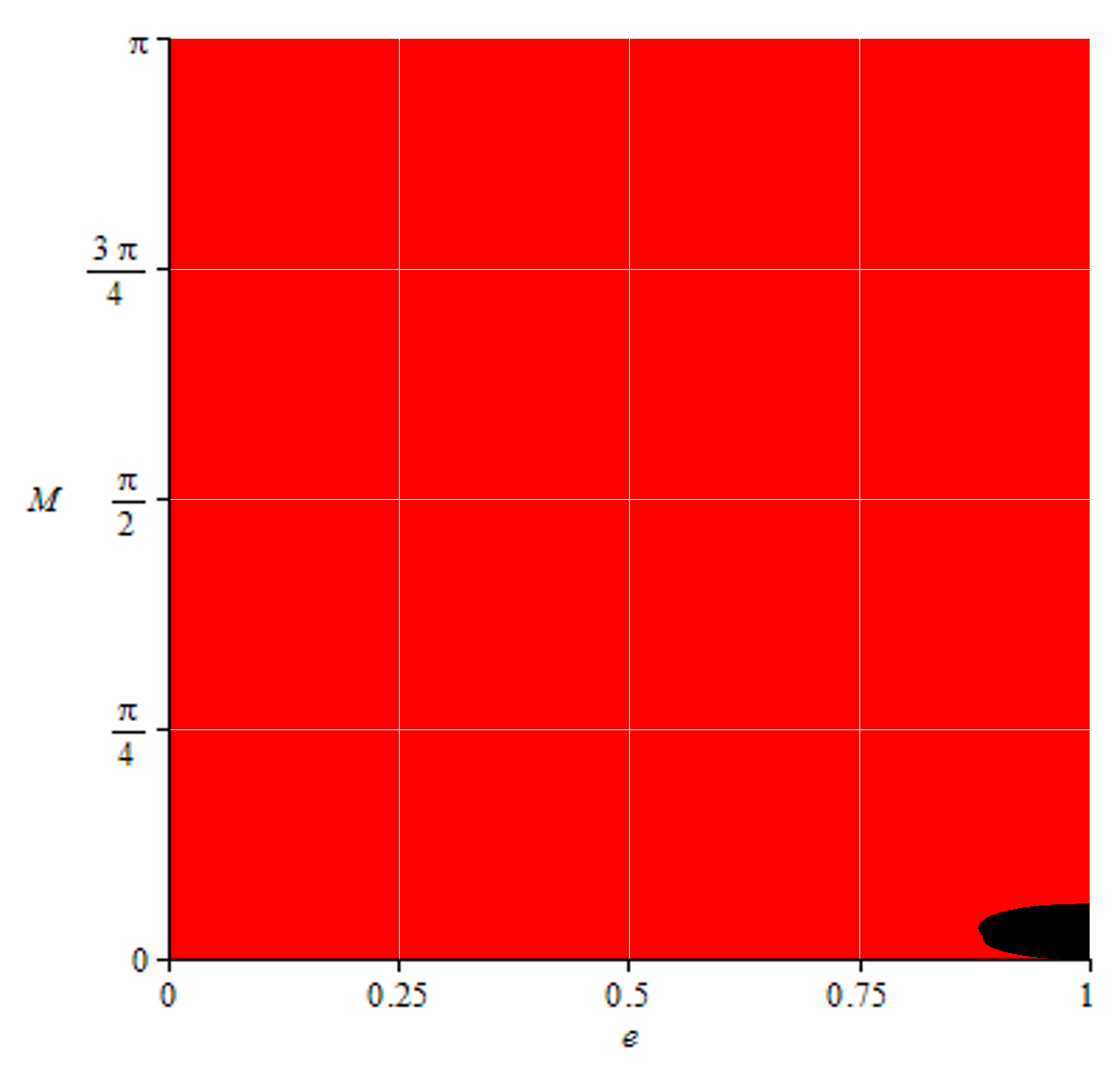}
\includegraphics[width=0.32\textwidth]{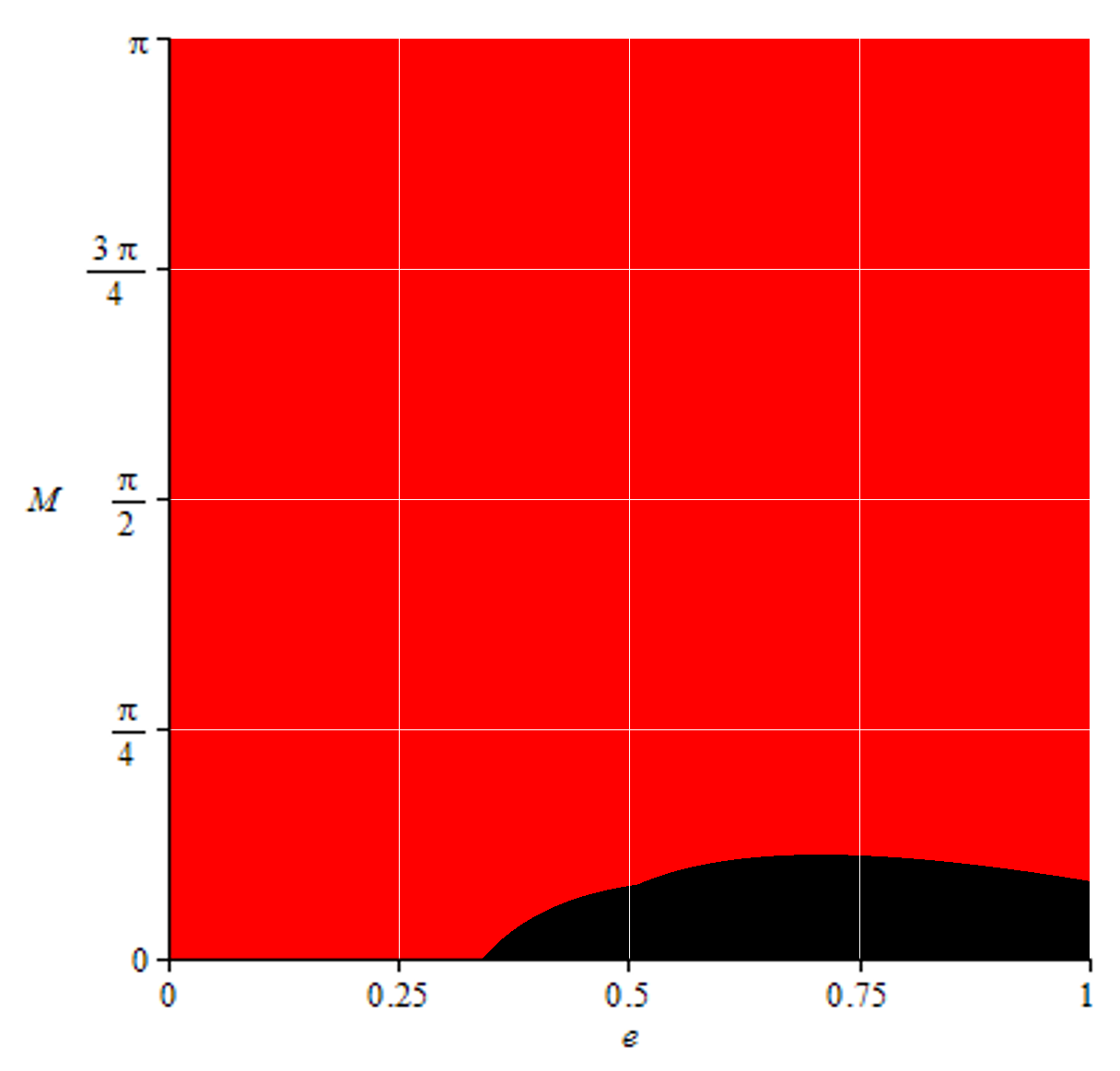}
\includegraphics[width=0.32\textwidth]{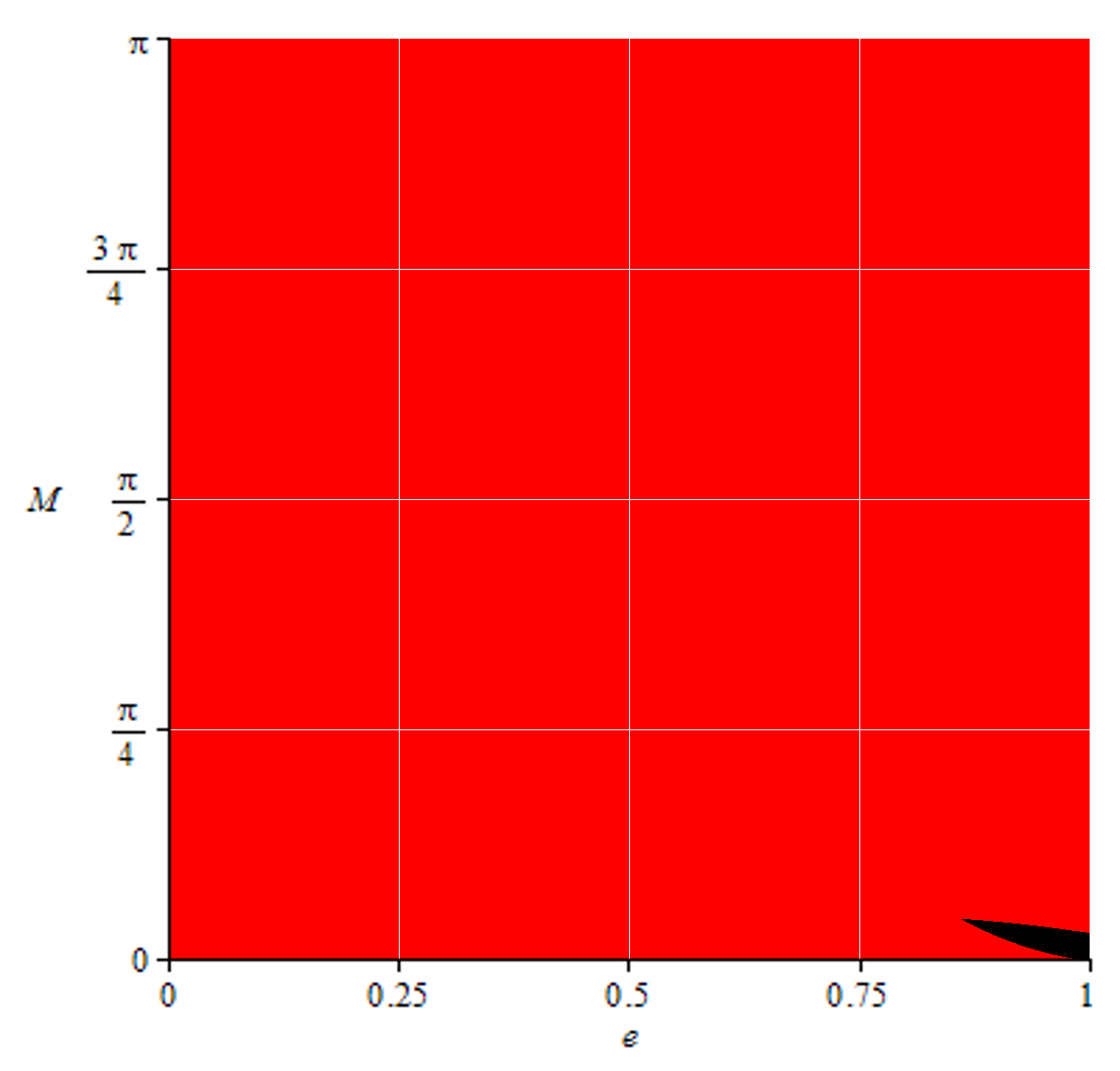}
\caption{The regions of $S_5$, $S_6$ and $S_7$.}
\label{figS567}
\end{figure}

\begin{figure}[ht!]
\centering
\includegraphics[width=0.32\textwidth]{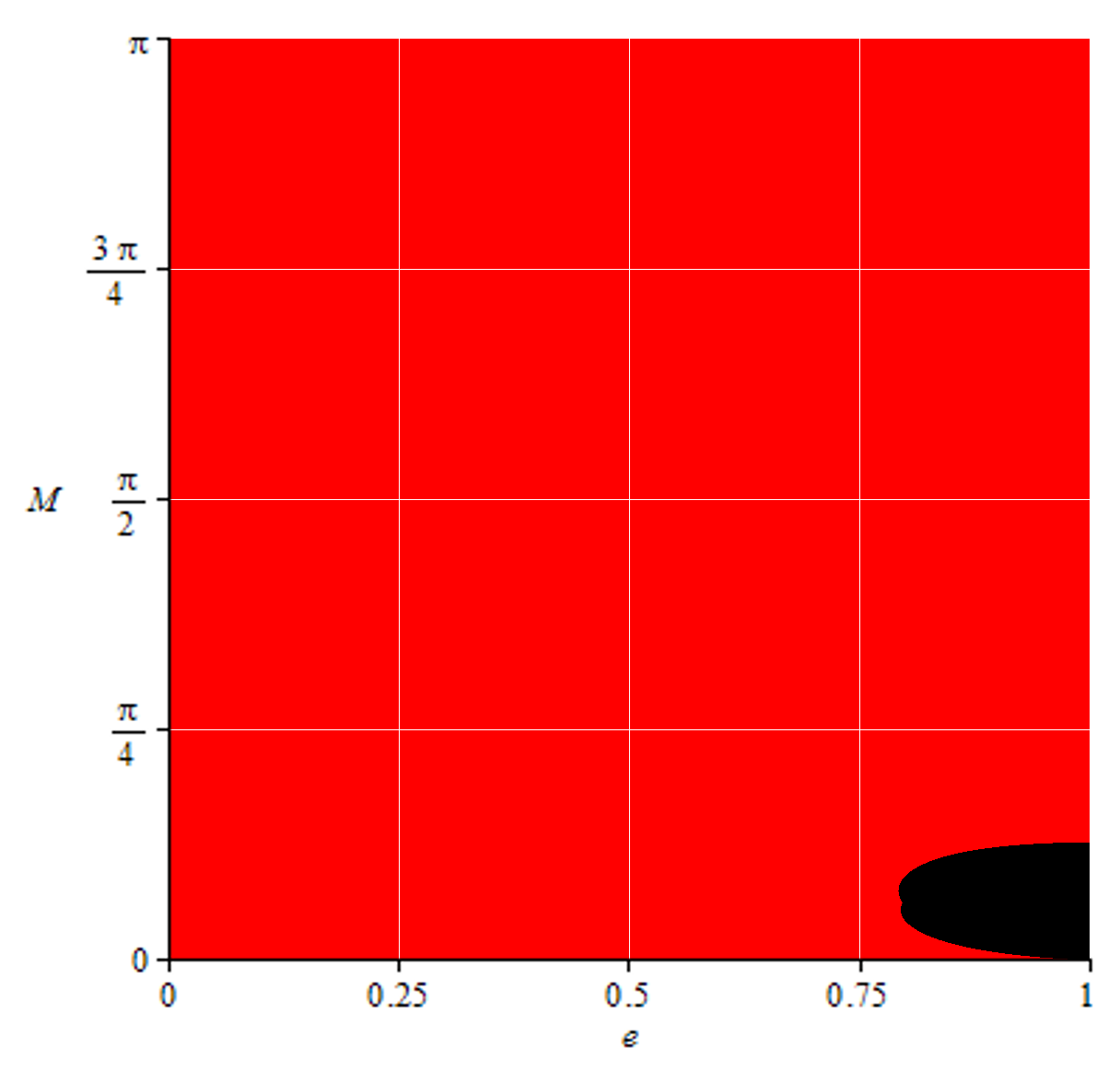}
\includegraphics[width=0.32\textwidth]{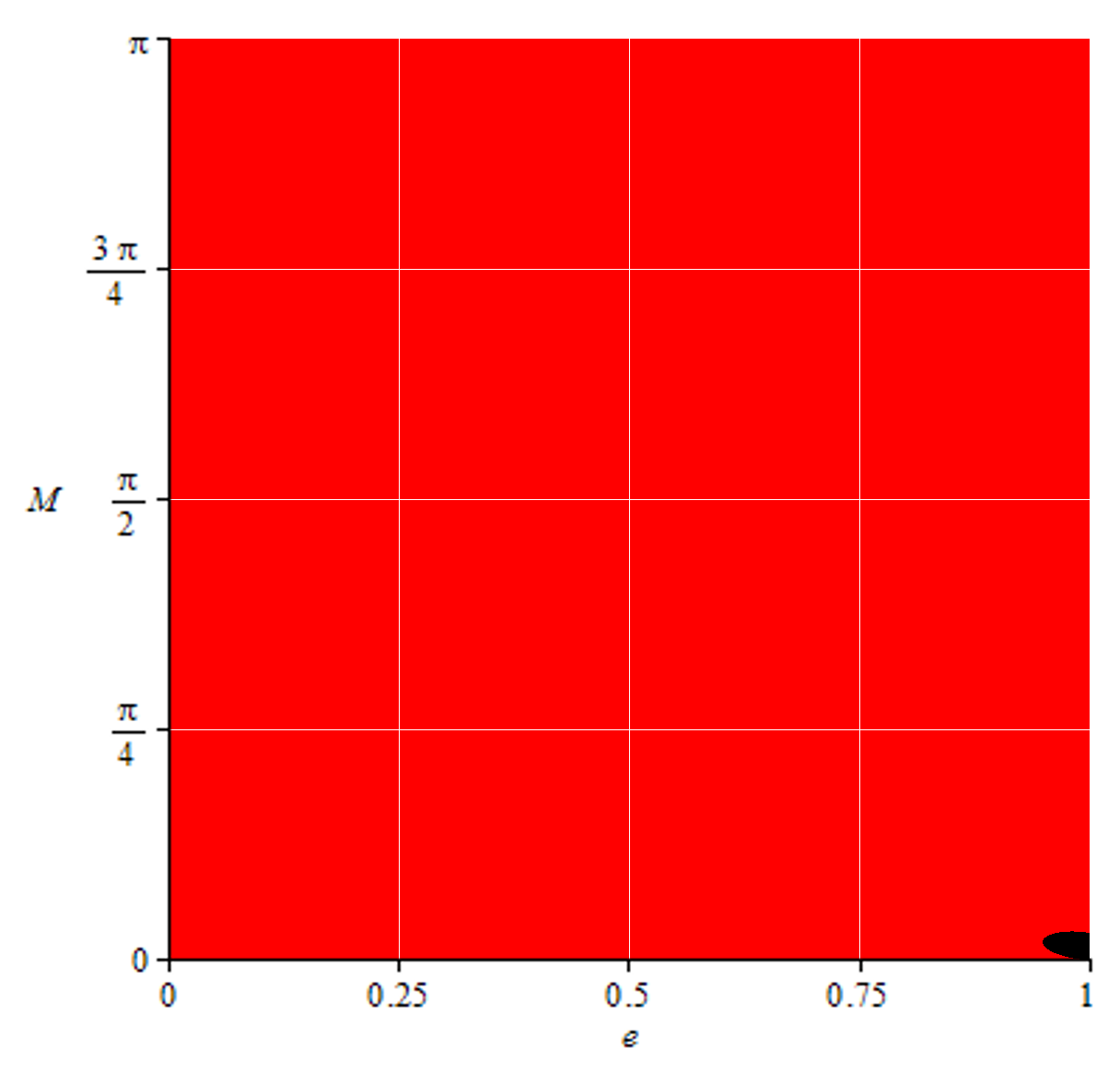}
\includegraphics[width=0.32\textwidth]{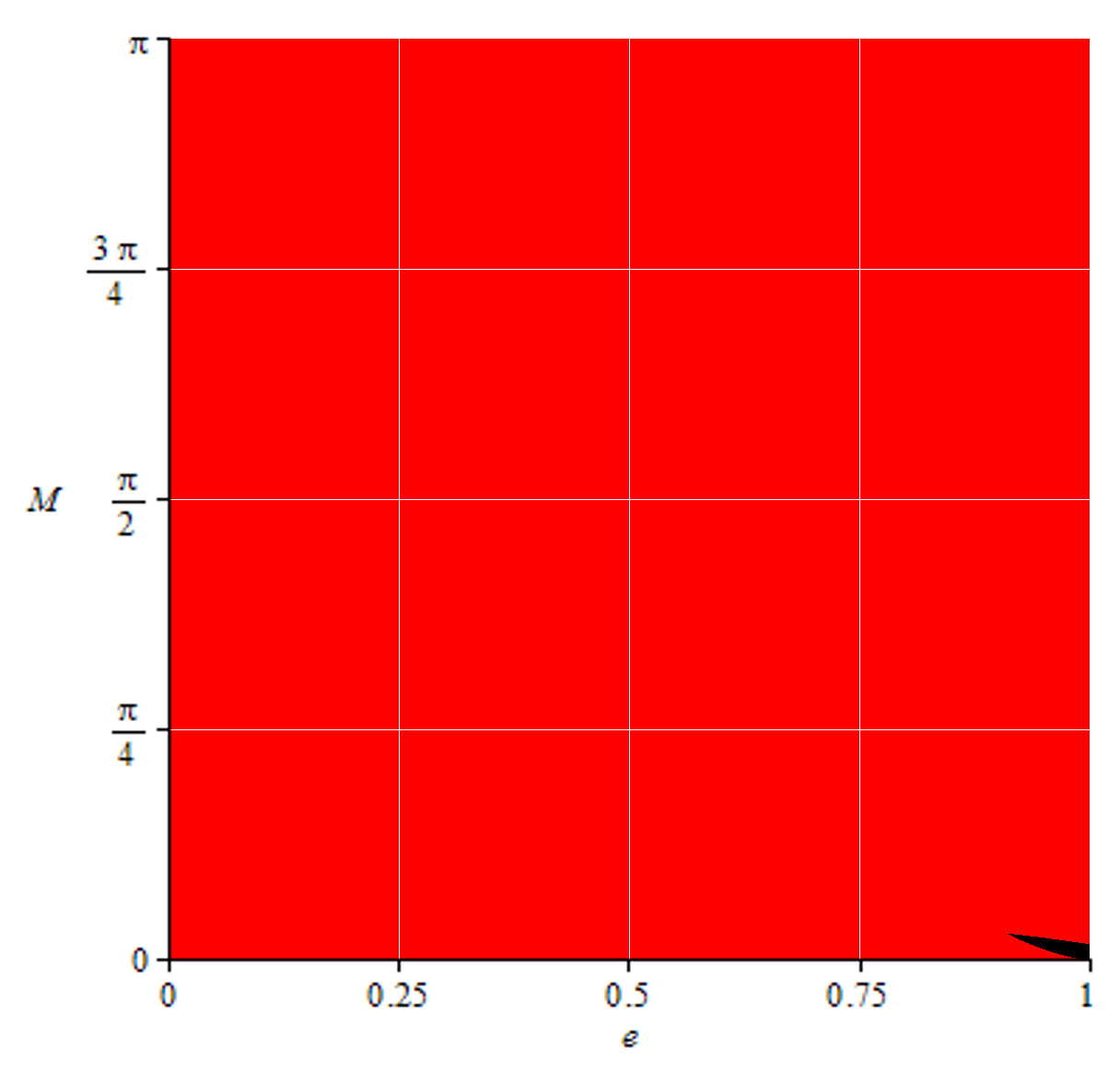}
\caption{The regions of $S_8$, $S_9$ and $S_{CEMR}$.}
\label{figS89}
\end{figure}

\newpage

\section{A simple new starter that covers the entire region}\label{sec-newstarter}

We devote this section to proving Theorem~\ref{thm-starter}. We study each branch separately.

\begin{thm}\label{thmE2pi3}
$\tilde{E}=\frac{2\pi}3$ is an approximate zero of $f_{e,M}(E)$ in the
region
\[
   \left\{\frac{\pi}{4}\leq M\leq \frac{2\pi}{3}   ,\, \frac12 \leq e<1 \right\}.
\]
\end{thm}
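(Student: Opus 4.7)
The plan is to apply Smale's $\alpha$-test directly at $\tilde{E} = 2\pi/3$. Since $\sin(2\pi/3)=\sqrt{3}/2$ and $\cos(2\pi/3)=-1/2$, I obtain $f_{e,M}(\tilde{E}) = 2\pi/3 - e\sqrt{3}/2 - M$, $f_{e,M}'(\tilde{E}) = 1 + e/2$, and $|f_{e,M}^{(k)}(\tilde{E})| = e\sqrt{3}/2$ for $k$ even and $e/2$ for $k$ odd; in particular, the factor $\gamma(f_{e,M},2\pi/3)$ depends on $e$ alone.

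To control $\gamma$, I would apply Lemma~\ref{lemma_sup} separately to the even- and odd-indexed subsequences, whose numerators differ. Writing $\gamma_k = (|f^{(k)}(\tilde{E})|/(k!(1+e/2)))^{1/(k-1)}$, the lemma applies starting at $n=4$ on even indices and at $n=5$ on odd ones, since the required thresholds $24/125$ and $120/1296$ are comfortably met by $e\sqrt{3}/(2+e)\geq \sqrt{3}/5$ and $e/(2+e)\geq 1/5$ throughout the region. Hence the supremum reduces to $\gamma = \max\{\gamma_2,\gamma_3,\gamma_4,\gamma_5\}$, and the elementary inequality $(2+e)^2 \ge 9e^2$ valid for $e\leq 1$ shows $\gamma_4 \ge \gamma_2$. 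Because $\gamma$ does not depend on $M$, for each fixed $e$ the product $\beta\gamma$ is maximized in $M$ at an endpoint of $[\pi/4,2\pi/3]$, so the theorem reduces to the two one-variable inequalities
\[
\frac{\tfrac{5\pi}{12}-e\tfrac{\sqrt{3}}{2}}{1+e/2}\,\gamma(e)<\alpha_0,\qquad \frac{e\sqrt{3}}{2+e}\,\gamma(e)<\alpha_0,
\]
for $e\in[1/2,1)$. I would verify these by partitioning $[1/2,1)$ into a few sub-intervals on which a specific $\gamma_i$ realizes the max and then exploiting monotonicity in $e$ of the resulting elementary expressions to reduce to endpoint checks.

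The main obstacle is the tightness of the $\alpha$-test in this region: at the corner $(e,M)=(1/2,\pi/4)$ a direct evaluation gives $\beta\gamma\approx 0.1707$, only marginally below $\alpha_0=3-2\sqrt{2}\approx 0.1716$. This forces sharp bounds that preserve the exact constants $\sqrt{3}/2$ and $1/2$ coming from the trigonometric values at $2\pi/3$, and rules out the looser uniform estimates used in Theorems~\ref{thmE0}--\ref{thmEM1-e}. Any relaxation such as replacing $1+e/2$ by a single constant lower bound immediately fails at that corner, so the proof must track the dependence on $e$ explicitly throughout.
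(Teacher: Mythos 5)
Your route is essentially the paper's: exact evaluation of $f$, $f'$, and the higher derivatives at $\tilde{E}=2\pi/3$, apply Lemma~\ref{lemma_sup} separately at $n=4$ on the even-indexed subsequence and $n=5$ on the odd-indexed one (your threshold checks $e\sqrt3/(2+e)\ge\sqrt3/5>24/125$ and $e/(2+e)\ge 1/5>120/1296$ are exactly the ones used), reduce $\gamma$ to $\max\{\gamma_2,\gamma_3,\gamma_4,\gamma_5\}$, and note the near-tightness $\beta\gamma\approx 0.171$ at $(e,M)=(1/2,\pi/4)$. However, you add one step the paper omits, and it is in fact necessary: since $f_{e,M}(2\pi/3)$ is affine in $M$, the maximum of $\beta\gamma$ over $M\in[\pi/4,2\pi/3]$ sits at an endpoint, so both of your displayed inequalities must be verified. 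The paper only bounds $\beta\le\frac{5\pi/12-e\sqrt3/2}{1+e/2}$ (the value at $M=\pi/4$) and checks that single expression; but that bound on $|f(2\pi/3)|$ fails once $e>5\pi/(12\sqrt3)\approx 0.756$, since then $|f|$ is largest at $M=2\pi/3$, where it equals $e\sqrt3/2>5\pi/12-e\sqrt3/2$. Your second inequality $\frac{e\sqrt3}{2+e}\gamma(e)<\alpha_0$, which increases in $e$ to about $0.1667$ at $e=1$, handles exactly the range the paper's argument leaves uncovered and makes the proof complete. One simplification to your sketch: no sub-interval partitioning of $[1/2,1)$ is needed, because $\gamma_4$ dominates $\gamma_2,\gamma_3,\gamma_5$ on the whole range (your observation $(2+e)^2\ge 9e^2$ gives $\gamma_4\ge\gamma_2$, and analogous elementary comparisons give $\gamma_4\ge\gamma_3,\gamma_5$), so each endpoint inequality reduces to a single monotone expression in $e$ with a one-point check.
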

\begin{proof}
First of all, we have that
\[
\beta \left(f_{e,M},\nicefrac{2\pi}3 \right) \leq \frac{\frac{2\pi}3-\frac{\sqrt3}2 e-\frac\pi4}{1+\nicefrac{e}{2}}=\frac{\frac{5\pi}{12}-\frac{\sqrt3}2 e}{1+\nicefrac{e}{2}}.
\]
On the other hand,
\[
\begin{aligned}
\gamma \left(f_{e,M},\frac{2\pi}3 \right) &=\max \left\{ \sup_{\doslineas{k \geq 2}{k\;\text{even}}} \left| \frac{e \frac{\sqrt{3}}2}{k! (1+\nicefrac{e}2)} \right|^{\frac1{k-1}},
\sup_{\doslineas{k \geq 3}{k\;\text{odd}}} \left| \frac{ \nicefrac{e}2}{k! (1+\nicefrac{e}2)} \right|^{\frac1{k-1}}  \right\}.
\end{aligned}
\]
Since $\frac{ \nicefrac{e}2}{1+\nicefrac{e}2} \in [\nicefrac13,\nicefrac15]$, we can apply Lemma  \ref{lemma_sup} for $n=4$ and $n=5$:
\[
\begin{aligned}
\sup_{\doslineas{k\geq 2}{k\;\text{even}}} \left| \frac{e \frac{\sqrt{3}}2}{k! (1+\nicefrac{e}2)} \right|^{\frac1{k-1}} &=
\max \left\{   \frac{e \frac{\sqrt{3}}2}{2! (1+\nicefrac{e}2)} , \left( \frac{e \frac{\sqrt{3}}2}{4! (1+\nicefrac{e}2)} \right)^{\frac13}\right\},\\
\sup_{\doslineas{k\geq 3}{k\;\text{odd}}} \left| \frac{\nicefrac{e}{2}}{k! (1+\nicefrac{e}2)} \right|^{\frac1{k-1}} &=
\max \left\{  \left(  \frac{\nicefrac{e}{2}}{3! (1+\nicefrac{e}2)} \right)^\frac12, \left( \frac{\nicefrac{e}{2}}{5! (1+\nicefrac{e}2)} \right)^{\frac14}\right\}.
\end{aligned}
\]
Comparing the four functions, we obtain
\[
\gamma \left(f_{e,M},\frac{2\pi}3 \right) = \left( \frac{e \frac{\sqrt{3}}2}{4! (1+\nicefrac{e}2)} \right)^{\frac13}.
\]
Therefore, the $\alpha$-test is satisfied if
\[
\frac{\frac{5\pi}{12}-\frac{\sqrt3}2 e}{1+\nicefrac{e}{2}} \left( \frac{e \frac{\sqrt{3}}2}{4! (1+\nicefrac{e}2)} \right)^{\frac13} < \alpha_0.
\]
Taking derivatives, it can be shown that the left-hand side of the inequality is a decreasing function of $e$. Also, its value at $e=\nicefrac12$ is approximately $0.1706$, which is less than $\alpha_0$.
\end{proof}

\begin{thm}\label{thmEpi2}
$\tilde{E}=\frac{\pi}2$ is an approximate zero of $f_{e,M}(E)$ in the region
\[
   \left\{\frac{\pi}{7}\leq M\leq \frac{\pi}{4}   ,\, \frac12\leq e<1 \right\}.
\]
\end{thm}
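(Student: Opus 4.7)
The plan is to follow the same pattern as in Theorem~\ref{thmE2pi3}. Since $\sin(\pi/2) = 1$ and $\cos(\pi/2) = 0$, the derivatives of $f_{e,M}$ at $\tilde{E} = \pi/2$ take a particularly clean form: $f_{e,M}(\pi/2) = \pi/2 - e - M$ and $f'_{e,M}(\pi/2) = 1$, while $f^{(k)}_{e,M}(\pi/2) = 0$ for every odd $k \geq 3$ and $|f^{(k)}_{e,M}(\pi/2)| = e$ for every even $k \geq 2$. These identities collapse $\beta$ and $\gamma$ to
\[
\beta(f_{e,M}, \pi/2) = |\pi/2 - e - M|, \qquad
\gamma(f_{e,M}, \pi/2) = \sup_{\doslineas{k \geq 2}{k\;\text{even}}} \left(\frac{e}{k!}\right)^{\frac{1}{k-1}}.
\]

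Since $e \geq \nicefrac12 \geq \nicefrac{24}{125} = 4!/5^3$, Lemma~\ref{lemma_sup} applied with $n=4$ forces the full sequence $\{(e/k!)^{\nicefrac{1}{k-1}}\}_{k \geq 4}$ to be decreasing, so the supremum reduces to
\[
\gamma(f_{e,M}, \pi/2) = \max\left\{\frac{e}{2},\ \left(\frac{e}{24}\right)^{\nicefrac13}\right\}.
\]
A direct comparison gives $e/2 \geq (e/24)^{\nicefrac13}$ precisely when $e \geq 1/\sqrt{3}$, so $\gamma$ is continuous and increasing in $e$ on $[\nicefrac12, 1)$, equal to $(e/24)^{\nicefrac13}$ on $[\nicefrac12, 1/\sqrt 3]$ and to $e/2$ on $[1/\sqrt 3, 1)$.

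Next I would verify $\beta\gamma < \alpha_0$ by splitting on the sign of $\pi/2 - e - M$. On the part where $e + M \geq \pi/2$, both $\beta = e + M - \pi/2$ and $\gamma$ are increasing in $e$, and $\beta$ is increasing in $M$, so $\beta\gamma$ is bounded above by its limit at $(e, M) \to (1, \pi/4)$, namely $(1 - \pi/4) \cdot \nicefrac12 \approx 0.107$, well below $\alpha_0$. On the part where $e + M \leq \pi/2$, $\beta = \pi/2 - e - M$ is decreasing in $M$, so it suffices to take $M = \pi/7$ and prove $(5\pi/14 - e)\,\gamma(e) < \alpha_0$ on $[\nicefrac12, 1)$. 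A short derivative computation shows this product is decreasing on each of the sub-intervals $[\nicefrac12, 1/\sqrt 3]$ and $[1/\sqrt 3, 1)$ (the critical points $5\pi/56$ and $5\pi/28$ both lie below $\nicefrac12$), so by continuity its maximum occurs at $e = \nicefrac12$ and equals $(5\pi/14 - \nicefrac12)(\nicefrac{1}{48})^{\nicefrac13}$.

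The main obstacle is the final inequality, because this quantity is approximately $0.17116$ while $\alpha_0 = 3 - 2\sqrt{2} \approx 0.17157$---a margin of less than three parts in a thousand. To make the argument rigorous I would avoid decimal approximations and instead rearrange it as the polynomial inequality $(5\pi - 7)^3 < 48 \cdot 14^3 (3 - 2\sqrt{2})^3$, which can then be certified using sufficiently precise rational bounds for $\pi$ and $\sqrt{2}$. The same tightness explains the choice of the endpoint $M = \pi/7$ in the statement: pushing this lower would violate the $\alpha$-test near $(\nicefrac12, \pi/7)$ and force a different starter in that region.
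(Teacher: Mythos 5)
Your proposal takes essentially the same route as the paper: the same closed form for $\gamma(f_{e,M},\pi/2)=\max\{e/2,(e/24)^{1/3}\}$ via Lemma~\ref{lemma_sup}, and the same evaluation of the two competing bounds ending at the tight endpoint value $(5\pi/14-\tfrac12)(1/48)^{1/3}\approx 0.1712<\alpha_0$. In fact you are slightly more careful than the paper: the paper writes $f_{e,M}(\pi/2)=\pi/2-e-M\le 5\pi/14-e$ and then uses $5\pi/14-e$ as a bound on $\beta$, but $\beta=|f_{e,M}(\pi/2)|$, and when $e+M>\pi/2$ (e.g.\ $e$ close to $1$, $M$ near $\pi/4$) one has $|f(\pi/2)|=e+M-\pi/2$, which can exceed $5\pi/14-e$. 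Your explicit split on the sign of $\pi/2-e-M$, with the trivial bound $(1-\pi/4)\cdot\tfrac12\approx 0.107$ in the negative case, fills this small gap cleanly.

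One small imprecision: you claim that the critical point $5\pi/28\approx 0.561$ of $(5\pi/14-e)\cdot e/2$ lies below $\tfrac12$, but it does not. What matters, and what is true, is that it lies below $1/\sqrt3\approx 0.577$, which is the left endpoint of the interval on which $\gamma(e)=e/2$; the critical point $5\pi/56\approx 0.281$ of the $(e/24)^{1/3}$ branch does lie below $\tfrac12$. The monotonicity conclusion on each sub-interval, and hence the maximum at $e=\tfrac12$, is therefore correct despite the misstatement. The rest of the argument, including the observation that the final inequality is best certified as an exact polynomial comparison because the margin is under $0.0005$, is sound.
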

\begin{proof}
We have that $f_{e,M}(\frac{\pi}2 ) =\frac{\pi}2-e-M \leq \frac{\pi}2-e-\frac{\pi}7=\frac{5\pi}{14}-e$
and $f_{e,M}'(\nicefrac{\pi}2 )=1$. Moreover, $f^{(\text{odd})}(\nicefrac{\pi}2)=0$, hence
\[
\gamma( f_{e,M},\nicefrac{\pi}2)=\sup_{\doslineas{k \geq 2}{k\;\text{even}}} \left| \frac{e }{k!} \right|^{\frac1{k-1}}
=\max \left\{ \frac{e}2, \sup_{\doslineas{k \geq 4}{k\;\text{even}}} \left| \frac{e }{k!} \right|^{\frac1{k-1}} \right\}
=\max \left\{ \frac{e}2, \sqrt[3]{\frac{e }{24}} \right\}
\]
by Lemma \ref{lemma_sup}.
The $\alpha$-test is satisfied because
\[
\begin{aligned}
\left(\frac{5\pi}{14}-e\right) \frac{e}2 &\leq \left(\frac{5\pi}{14}-\frac{5\pi}{28}\right) \frac{\nicefrac{5\pi}{28}}{2} \approx 0.1573 <\alpha_0,\\
\left(\frac{5\pi}{14}-e\right) \sqrt[3]{\frac{e}{24}} &\leq \left(\frac{5\pi}{14}-\frac{1}{2}\right) \sqrt[3]{\frac{\nicefrac12}{24}} \approx 0.1711 <\alpha_0,
\end{aligned}
\]
which ends the proof.
\end{proof}

\begin{thm}\label{thmEmartin}
$\tilde{E}=\frac{\sqrt[3]{6Me^2}}{e}-\frac{2(1-e)}{\sqrt[3]{6Me^2}} $ is an approximate zero of $f_{e,M}(E)$ in the region $R_7$, where
\[
   \begin{aligned}
   R_7 &= \left\{  \frac{8(1-e)^{3/2}}{27\sqrt6 \alpha_0 e^{1/2}} < M \leq \frac{\pi}7, \frac{3}{11} \leq e <1 \right\}.
   \end{aligned}
 \]
\end{thm}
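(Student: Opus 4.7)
I would verify Smale's $\alpha$-test directly at the given $\tilde E$ throughout $R_7$. The starter is designed so that it approximately satisfies the truncated cubic $(1-e)E + \frac{eE^3}{6} - M = 0$ of Theorem~\ref{thmEcubica}, so the first step is to quantify the residue. Setting $u = \sqrt[3]{6Me^2}$, so that $\tilde E = u/e - 2(1-e)/u$ and $u^3 = 6Me^2$, I would expand $\tilde E^3$ by the binomial theorem and substitute $u^3 = 6Me^2$ in the leading term. The two cross terms of the expansion cancel precisely against $(1-e)\tilde E = (1-e)u/e - 2(1-e)^2/u$, leaving the clean identity
\[
(1-e)\tilde E + \frac{e\tilde E^3}{6} - M \;=\; -\,\frac{2(1-e)^3}{9Me}.
\]

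With this, decompose $f_{e,M}(\tilde E) = \bigl[(1-e)\tilde E + \tfrac{e\tilde E^3}{6} - M\bigr] + e\bigl[\tilde E - \sin\tilde E - \tfrac{\tilde E^3}{6}\bigr]$: the first bracket equals $-2(1-e)^3/(9Me)$, while the second is bounded by $\tilde E^5/120$ via the Leibniz alternating-series estimate for the Taylor tail of $\sin$, valid on $[0,\pi]$. Hence
\[
|f_{e,M}(\tilde E)| \;\leq\; \frac{2(1-e)^3}{9Me} + \frac{e\tilde E^5}{120},
\]
and the a priori bound $\tilde E^3 \leq 6M/e$ drops out of the identity since $(1-e)\tilde E \geq 0$ and the cubic residue is nonpositive in $R_7$. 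For $\gamma$, I would use $|f_{e,M}^{(k)}(\tilde E)| \leq e$ and $f_{e,M}'(\tilde E) = 1 - e\cos\tilde E \geq 1-e$ together with Lemma~\ref{lemma_sup}; since $e/(1-e) \geq 3/8$ throughout $R_7$, this yields $\gamma \leq \sqrt{e/[6(1-e)]}$, after treating the $k=2$ term separately via $|\sin\tilde E| \leq \tilde E \leq \sqrt[3]{6M/e}$.

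The final step assembles $\alpha = \beta\cdot\gamma$ and invokes the defining lower bound $M > \frac{8(1-e)^{3/2}}{27\sqrt{6}\alpha_0\sqrt{e}}$ of $R_7$, which is calibrated precisely so that
\[
\frac{2(1-e)^2}{9Me}\cdot\sqrt{\tfrac{e}{6(1-e)}} \;=\; \frac{2(1-e)^{3/2}}{9\sqrt{6}\,\sqrt{e}\,M} \;<\; \tfrac{3}{4}\alpha_0,
\]
neatly absorbing the dominant piece of $\beta\gamma$. The residual contribution $\tfrac{e\tilde E^5}{120(1-e)}\sqrt{e/[6(1-e)]}$ must be shown to stay below $\tfrac{1}{4}\alpha_0$, and I expect this to be the main obstacle: for $e$ near $1$ the factor $1-e$ in the denominator is dangerous while $\tilde E$ can reach values of order one, so the estimate $\tilde E \leq \sqrt[3]{6M/e}$ alone is too crude. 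To close the argument I would either sharpen $f_{e,M}'(\tilde E) \geq 1-e + 2e\tilde E^2/\pi^2$ (using $\cos\tilde E \leq 1 - 2\tilde E^2/\pi^2$ on $[0,\pi]$) so as to cancel a factor of $\tilde E^2$, or exploit the monotonicity of the two contributions in $M$ at fixed $e$ (the first decreasing, the second increasing) to reduce the verification to the boundary $M = \pi/7$ or the lower boundary of $R_7$.
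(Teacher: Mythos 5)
Your plan reproduces the paper's approach closely: the algebraic identity $(1-e)\tilde E + \tfrac{e\tilde E^3}{6} - M = -\tfrac{2(1-e)^3}{9Me}$, the decomposition of $f_{e,M}(\tilde E)$ into the cubic residue plus the Taylor tail of $\sin$, the Leibniz bound $e\tilde E^5/120$ for the tail, and the lower bound on $M$ in $R_7$ being calibrated to absorb the term $\tfrac{2(1-e)^2}{9eM}\sqrt{e/(6(1-e))} < \tfrac34\alpha_0$. You also correctly anticipate that the remaining $\tilde E^5$ contribution needs the sharpened estimate $f'(\tilde E) \geq \tfrac{4}{\pi^2}\tilde E^2$ and that the $k=2$ summand in $\gamma$ needs separate handling — both of these are exactly what the paper does.

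However, there is a genuine gap in the bounds on $\tilde E$ that you propose to use. The a priori bound $\tilde E \leq (6M/e)^{1/3}$ is not strong enough to close the residual-term estimate: plugging $f' \geq \tfrac{4}{\pi^2}\tilde E^2$ and $\gamma\leq\tfrac{\pi^2}{8\tilde E^2}$ into the $\tilde E^5$ piece leaves the requirement $\tilde E < 960\alpha_0/\pi^4 \approx 1.69$, and $(6M/e)^{1/3}$ exceeds $1.8$ at the low-$e$, $M=\pi/7$ corner of $R_7$. The paper instead proves the much sharper bound $\tilde E\leq\pi/2$ on all of $[0,\pi/7]\times[0,1)$, via a minimization of $h(e,M)=\tfrac{\sqrt[3]{36}e^{1/3}M^{2/3}}{2(1-e)}-\tfrac{\pi\sqrt[3]{6}e^{2/3}M^{1/3}}{4(1-e)}$ followed by a substitution $e=x^3$ to reduce to a monotone cubic in $x$; this argument is essential and nothing in your sketch replaces it. Likewise, your suggestion to handle the $k=2$ term via $|\sin\tilde E|\leq\tilde E\leq(6M/e)^{1/3}$ does not discharge the resulting inequality $\tfrac{(1-e)\tilde E}{9M}<\tfrac34\alpha_0$: the paper closes this by an explicit minimization of the auxiliary function $g(e,M)=\tfrac{\sqrt[3]{6}e^{2/3}}{(1-e)^2}M^{4/3}-\tfrac{4\sqrt[3]{36}e^{1/3}}{27\alpha_0(1-e)}M^{2/3}$, a nontrivial calculation that your plan defers. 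Finally, you never verify $\tilde E\geq 0$, which is needed both for the Leibniz estimate and for deducing $\tilde E^3\leq 6M/e$ from the identity; this follows from the lower bound on $M$ in $R_7$, but it should be stated.
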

\begin{proof}
The first condition we have to impose is that $\tilde{E}\geq0$, which is equivalent to $M \geq \frac{\sqrt2(1-e)^{\nicefrac32}}{3 \alpha_0 e^{\nicefrac12}}$ and true in $R_7$. We also show that $\tilde{E} \leq \nicefrac{\pi}2$ in $[0,\nicefrac{\pi}7] \times [0,1)$, which includes $R_7$.

Indeed, $\tilde{E} \leq \frac{\pi}2$ is equivalent to
\begin{equation}\label{eq-pp2}
h(e,M)= \frac{\sqrt[3]{36} e^{\nicefrac13} M^{\nicefrac23}}{2(1-e)} -\frac{\pi \sqrt[3]{6} e^{\nicefrac23} M^{\nicefrac13}}{4(1-e)} \leq 1.
\end{equation}
For a fixed $e$, the function $h$ has a minimum at $M=\frac{\pi^3}{384}e$ and no other critical points. Therefore, the inequality \eqref{eq-pp2} holds if and only if $h(e,0)  \leq 1$ and $h \left(e,\frac{\pi}7 \right)\leq 1$. The first one is trivial since $h(e,0)=0$ and the second one is equivalent to
\[
2 \sqrt[3]{36} \left(\frac{\pi}{7}\right)^{\frac23} e^{\frac13} -\pi \sqrt[3]{6} \left(\frac{\pi}{7}\right)^{\frac13} e^{\frac23} -4(1-e) <0.
\]
The substitution $e=x^3$ transforms the inequality above into
\[
2 \sqrt[3]{36} \left(\frac{\pi}{7}\right)^{\frac23} x -\pi \sqrt[3]{6} \left(\frac{\pi}{7}\right)^{\frac13} x^2 -4(1-x^3) <0,
\]
which is verified for all $x \in [0,1]$ since the expression in $x$ is increasing and the inequality is true
at $x=1$.

Substituting the expression for $\tilde{E}$ and using the Taylor expansion of $\sin{\tilde{E}}$, we obtain
\[
\begin{aligned}
|f(\tilde{E})| &=|\tilde{E}-e \sin(\tilde{E})-M|=\left|\tilde{E}(1-e)+e \left( \frac{\tilde{E}^3}{3!}-\frac{\tilde{E}^5}{5!}+\cdots \right)-M \right|\\
& \leq \left|\tilde{E}(1-e)+e\frac{\tilde{E}^3}{6}-M \right|+\left|\frac{\tilde{E}^5}{120} \right| =\frac{2(1-e)^3}{9eM}+\left|\frac{\tilde{E}^5}{120} \right|,
\end{aligned}
\]
where we have bounded the alternating series using Leibniz's criterion (possible because $\tilde{E} < \sqrt{42}$).

Since $\tilde{E}\leq \frac{\pi}2$, we have both  $f'(\tilde{E}) \geq 1-e$ and $f'(\tilde{E}) \geq 1-\cos(\tilde{E})=2 \sin^2(\frac{\tilde{E}}{2})\geq \frac{4}{\pi^2} \tilde{E}^2$. Therefore,  the $\alpha$-test follows if we prove the stronger conditions
\begin{equation} \label{eq-qq}
\frac{2(1-e)^2}{9eM} \gamma(f_{e,M},\tilde{E}) < \frac{3\alpha_0}{4} \, \text{ and } \, \left|\frac{\tilde{E}^3\pi^2}{480} \right| \gamma(f_{e,M},\tilde{E}) < \frac{\alpha_0}{4}.
\end{equation}
The second one holds because
\[
\gamma(f_{e,M},\tilde{E}) \leq \sup_{k\geq2} \left| \frac{1}{k!(1-\cos(\tilde{E})}\right|^{\frac{1}{k-1}} \leq \sup_{k\geq2} \left| \frac{\pi^2}{4 k! \tilde{E}^2  }\right|^{\frac{1}{k-1}} =\frac{\pi^2}{8 \tilde{E}^2},
\]
by Lemma \ref{lemma_sup}, and
\[
\left|\frac{\tilde{E}^3\pi^2}{480} \right| \frac{\pi^2}{8 \tilde{E}^2} =\frac{\pi^4 \tilde{E}}{3840} < \frac{\alpha_0}{4} \Leftrightarrow \tilde{E} <\frac{960\alpha_0}{\pi^4} \approx 1.69,
\]
which is true since $\tilde{E} \leq \frac{\pi}2$ in $R_7$.

For the first inequality in \eqref{eq-qq}, we need
\[
\begin{aligned}
\gamma(f_{e,M},\tilde{E}) &\leq   \max \left\{ \frac{e \sin(\tilde{E}) }{2! (1-e)},  \sup_{k\geq3} \left| \frac{e}{k!(1-e)}\right|^{\frac{1}{k-1}}   \right\}\\
& \leq \max \left\{ \frac{e \tilde{E} }{2! (1-e)},  \left| \frac{e}{3!(1-e)}\right|^{\frac{1}{2}}   \right\},
\end{aligned}
\]
true by Lemma \ref{lemma_sup} when $e \geq \nicefrac3{11}$. Therefore,
\[
 \frac{2(1-e)^2}{9eM}  \left| \frac{e}{3!(1-e)}\right|^{\nicefrac{1}{2}}  <\frac{3\alpha_0}4 \Leftrightarrow  M > \frac{8(1-e)^{\nicefrac32}}{27\sqrt6 \alpha e^{\nicefrac12}},
\]
which is one of the conditions of the region $R_7$.

It only remains to show that
\[
\frac{2(1-e)^2}{9eM}  \frac{e \tilde{E} }{2! (1-e)}=\frac{\tilde{E}(1-e)}{9M}  <\frac{3\alpha_0}4,
\]
which is equivalent to
\begin{equation*}
M-\frac{4}{27\alpha_0}(1-e)\tilde{E} >0 \text{ or } \underbracket[0.5pt]{\frac{\sqrt[3]{6} e^{\nicefrac23}}{(1-e)^2} M^{\nicefrac43} -\frac{4\sqrt[3]{36} e^{\nicefrac13}}{27\alpha_0(1-e)} M^{\nicefrac23}}_{g(e,M)} >-\frac{8}{27 \alpha_0}.
\end{equation*}
This is true for every $e \in [0,1)$ and $M \in [0,\pi]$ because, if we fix $e$, the function $g$ has a minimum at $M=\sqrt{\frac{48}{27^3 \alpha_0^3}}$ and
\[
g \left(e,\sqrt{\frac{48}{27^3 \alpha_0^3}} \right)=-\frac{24}{27^2 \alpha_0^2} > -\frac{8}{27 \alpha_0}.
\]
\end{proof}

\begin{proof}[Proof of Theorem~\ref{thm-starter}]
It follows immediately from Theorems \ref{thmEM}, \ref{thmE2pi3}, \ref{thmEpi2}, \ref{thmEM1-e} and \ref{thmEmartin}, and the inequality $\sqrt[4]{12 \alpha_0} > \frac{8}{27 \sqrt6 \alpha_0}$ that implies that the ``otherwise'' region is included in the one from Theorem \ref{thmEmartin}.
\end{proof}

\section{Approximate solutions near $e=1$ and $M=0$}\label{sec-corner}

In this section we will prove Theorems~\ref{thm-grid} and \ref{thm-unavoidable}.

\begin{proof}[Proof of Theorem~\ref{thm-grid}]
Given $\varepsilon >0$, let us take a natural number $N$ such that $N> \frac{\pi+2}{2 \alpha_0 \varepsilon^2}$. Given two integers $i \in \{ 0,\ldots,N-1 \}$ and $j \in \{ 0,\ldots, N \}$, we define the constants $E_{ij}^{\text{low}}=\frac{\pi j}N$ and $E_{ij}^{\text{up}}=\pi$, which satisfy
\begin{align*}
E_{ij}^{\text{low}}-\frac{i}N \sin (E_{ij}^{\text{low}})-\frac{\pi j}N  &=-\frac{i}N \sin \left(\frac{\pi j}N \right) \leq 0,\\
E_{ij}^{\text{up}}-\frac{i}N \sin (E_{ij}^{\text{up}})-\frac{\pi j}N    &=\pi-\frac{\pi j}N \geq0,
\end{align*}
respectively. By the bisection method, we can thus find $E_{ij}$ such that
\begin{equation*}
\frac{\pi j}N=E_{ij}^{\text{low}} \leq E_{ij} \leq E_{ij}^{\text{up}}=\pi \; \text{ and } \; \left| E_{ij}-\frac{i}N \sin (E_{ij})-\frac{\pi j}N \right| <\frac{1}{N}.
\end{equation*}

Given $(e,M) \in ([0,1) \times [0,\pi]) \setminus ([1-\varepsilon,1] \times [0,\arccos(1-\varepsilon)])$, we now define $\tilde{E}(e,M)=E_{ij}$, where $i=\lfloor Ne\rfloor \in \{ 0,\ldots,N-1 \}$ and $j=\left\lceil \frac{M N}{\pi} \right\rceil \in \{ 0,\ldots, N \}$. Therefore, $\tilde{E}$ is a piecewise constant function and it only remains to show that it satisfies the $\alpha$-test.

Indeed, we have that
\begin{align*}
|f(\tilde{E})|&=|E_{ij} -e \sin (E_{ij})-M|\\
&=\left|\left(E_{ij}-\frac{i}N \sin (E_{ij})-\frac{\pi j}N \right) -\left(e-\frac{i}N \right) \sin (E_{ij})-\left(M-\frac{\pi j}N \right) \right|\\
&< \frac1N+\left| e-\frac{i}N \right| +\left| M-\frac{\pi j }N \right| \leq \frac{\pi+2}{N}.
\end{align*}
On the other hand, $|f'(\tilde{E})|=1-e\cos(\tilde{E}) \geq \varepsilon$ because
\[
|f'(\tilde{E})| \geq
\left\{
\begin{array}{ll}
1-e \geq \varepsilon & \text{ if } e \in [0,1-\varepsilon],\\
1-\cos(E_{ij}) \geq 1-\cos(M) \geq \varepsilon & \text{ if } \tilde{E} \in [0,\nicefrac{\pi}2], M \geq \arccos(1-\varepsilon),\\
1 \geq \varepsilon   & \text{ if } \tilde{E} \in [\nicefrac{\pi}2,\pi],
\end{array}
\right.
\]
where we have used that $E_{ij} \geq E_{ij}^{\text{low}}=\frac{\pi j}{N} = \frac{\pi \lceil \frac{MN}{\pi}\rceil}{N}\geq M$.

Since $|f^{(k)}(\tilde{E})| \leq 1$, we obtain using Lemma \ref{lemma_sup} and the hypothesis over $N$ that
\[
\alpha(f_{e,M},\tilde{E}) \leq \frac{\pi+2}{N \varepsilon} \sup_{k\geq2} \left| \frac{1}{k! \varepsilon} \right|^{\frac{1}{k-1}}
\leq \frac{\pi+2}{2 N \varepsilon^2} < \alpha_0,
\]
which ends the proof.
\end{proof}

\begin{proof}[Proof of Theorem~\ref{thm-unavoidable}]
We proceed by contradiction, i.e.~we assume that $\tilde{E}(e,M)$ is an approximate zero of
$f_{e,M}$ for all $e\in [0,1)$ and $M\in[0,\pi]$. Since the branches of $\tilde{E}$ are given
by polynomial inequalities, there is an open set $U\subseteq\R^2$ and $\varepsilon>0$ such that
$\overline{U}\supset \{1\}\times[0,\varepsilon]$ and $\tilde{E}$ is a rational function on $U\cap([0,1)\times[0,\pi])$. We also assume that $U\subseteq [\nicefrac{1}{2},1)\times[0,0.0001]$.

By definition of approximate zero, we have that
\[
  |f(\tilde{E})|
  \underbracket[0.5pt]{
  \max\left\{
     \frac{e |\sin(\tilde{E})|}{2(1-e\cos\tilde{E})^2},
     \sqrt{\frac{e|\cos(\tilde{E})|}{6(1-e\cos\tilde{E})^3}},
     \sqrt[3]{\frac{e|\sin(\tilde{E})|}{24(1-e\cos\tilde{E})^4}}
   \right\}}_{B}<\alpha_0.
\]
It can be readily verified that $B\geq 0.14433$ for all $e\in [\nicefrac{1}{2},1)$ and any
$\tilde{E}\in\R$, so $|f(\tilde{E})|<\frac{\alpha_0}{0.14433}\leq 1.1888$ in $U$. By the
triangle inequality, this implies that $|\tilde{E}|< 1.1888+e+M<2.1889$ in $U$. Repeating
the argument, but using that $|\tilde{E}|<2.1889$, it can be shown that $B\geq 0.176$,
so $|\tilde{E}|<\frac{\alpha_0}{0.176}+e+M\leq 1.975$ in $U$. Doing this one more time, gives $B\geq 0.2368$
and the estimate $|\tilde{E}|<1.725$ in $U$.

Since $\tilde{E}$ is bounded in $U$, it can be extended analytically to $\{1\}\times(0,\delta)$
for some $0<\delta<\varepsilon\leq 0.0001$. To show this, recall that $\tilde{E}(e,M)=\frac{p(e,M)}{q(e,M)}$ for some polynomials $p$ and $q$ with no common factors. Now, if $q(1,M)$ were zero (as a polynomial),
then $q$ would be divisible by $e-1$ and $p$ would not, so $\tilde{E}$ would not be bounded, in
contradiction with our previous result. This proves that $q(1,M)\not\equiv 0$, so we can
take $\delta>0$ small enough to ensure that $q(1,M)$ has no roots in $(0,\delta)$, hence $\tilde{E}(1,M)$
is well defined.

Denote $\tilde{E}_1(M)=\tilde{E}(1,M)$ for $M\in(0,\delta)$. Using that $B\geq\frac{e |\sin(\tilde{E})|}{2(1-e\cos\tilde{E})^2}$, we get
\[
  |\tilde{E}-e\sin \tilde{E}-M|\leq
  \frac{2\alpha_0(1-e\cos\tilde{E})^2}{e |\sin(\tilde{E})|}.
\]
Taking limit as $e\rightarrow 1^-$, we obtain
\[
  |\tilde{E}_1-\sin \tilde{E}_1-M|\leq
  \frac{2\alpha_0(1-\cos\tilde{E}_1)^2}{|\sin(\tilde{E}_1)|}=
  \frac{4\alpha_0|\sin(\frac{\tilde{E}_1}{2})|^3}{|\cos(\frac{\tilde{E}_1}{2})|}\leq
  \frac{\alpha_0|\tilde{E}_1|^3}{2|\cos(\frac{\tilde{E}_1}{2})|}<
  0.133 |\tilde{E}_1|^3
\]
for all $M\in(0,\delta)$. By the power series expansion of $\sin(\tilde{E}_1)$,
\[
  \left|\frac{\tilde{E}_1^3}{3!}-\frac{\tilde{E}_1^5}{5!}+\ldots-M\right|< 0.133 |\tilde{E}_1^3|.
\]
By the triangle inequality,
\[
\begin{aligned}
  \left|\frac{\tilde{E}_1^3}6-M\right| &\leq
  0.133 |\tilde{E}_1^3| + \left|\frac{\tilde{E}_1^5}{5!}-\frac{\tilde{E}_1^7}{7!}+\ldots\right|\\
  &\leq
  |\tilde{E}_1^3|\left(0.133+\frac{\tilde{E}_1^2}{120} \left(1+\frac{\tilde{E}_1^2}{6 \cdot 7}+\frac{\tilde{E}_1^4}{6 \cdot 7 \cdot 8 \cdot 9}+\ldots \right)\right)\\
  &\leq
  |\tilde{E}_1^3|\left(0.133+\frac{1.725^2}{120} \left(1+\frac{1.725^2}{6^2}+\frac{1.725^4}{6^4}+\ldots \right)\right)\\
  & \leq 0.161|\tilde{E}_1^3|,
\end{aligned}
\]
for all $M\in(0,\delta)$. This implies that $(\nicefrac16-0.161)|\tilde{E}_1^3|\leq M$, or
equivalently,
\[
  |\tilde{E}_1|\leq \sqrt[3]{\frac{|M|}{\nicefrac16-0.161}}\xrightarrow[M\to 0^+]{}0.
\]
This shows that $\tilde{E}_1$ has a removable singularity at $M=0$, so it can be
extended analytically to $[0,\delta)$ with $\tilde{E}_1(0)=0$. Moreover,
$\tilde{E}_1(M)=Mr(M)$ for some analytic function $r(M)$ in $[0,\delta)$, since the
power series of $\tilde{E}_1$ cannot have a non-zero constant term.

Finally, by definition of approximate zero,
\[
  \begin{aligned}
  \alpha_0 &>
  \frac{|f(\tilde{E})|}{1-e\cos(\tilde{E})}\max\left\{
      \frac{e |\sin(\tilde{E})|}{2(1-e\cos\tilde{E})},
     \sqrt{\frac{e|\cos(\tilde{E})|}{6(1-e\cos\tilde{E})}}
  \right\}\\
  & \geq
  \frac{|f(\tilde{E})|}{1-e\cos(\tilde{E})}\max\left\{
      \frac{e|\sin(\tilde E)|}{\sqrt{6(1-e\cos(\tilde E))}},
      \frac{e|\cos(\tilde E)|}{\sqrt{6(1-e\cos(\tilde E))}}
  \right\}\\
   &= \frac{e|f(\tilde E)|}{\sqrt{6}(1-e\cos(\tilde{E}))^{\nicefrac32}}\max\{
                                |\sin \tilde E|, |\cos \tilde E|\}\geq
   \frac{e|f(\tilde E)|}{\sqrt{12}(1-e\cos(\tilde{E}))^{\nicefrac32}},
  \end{aligned}
\]
and taking limit as $e\rightarrow 1^-$,
\[
  \begin{aligned}
  |\tilde{E}_1-\sin \tilde{E}_1-M| &\leq
  \sqrt{12}\alpha_0(1-\cos(\tilde{E}_1))^{\nicefrac32}=\\
  &=\sqrt{96}\alpha_0|\sin^3 \left(\frac{\tilde{E}_1}{2} \right)|\leq
  \frac{\sqrt{96}\alpha_0|\tilde{E}_1^3|}{8}=\sqrt{\frac32}\alpha_0|\tilde{E}_1|^3.
  \end{aligned}
\]
Dividing by $M$, using that $\tilde{E}_1(M)=Mr(M)$ and taking limits as $M\rightarrow 0^+$,
\[
  \left|r(M)-\frac{\sin(M r(M))}{M}-1\right|
  \leq   \sqrt{\dfrac32}\alpha_0 M^2 |r(M)|^3,
\]
which gives us the contradiction $1 \leq 0$.
\end{proof}

\begin{remark}\label{remark-unavoidable}
Note that in the proof of Theorem~\ref{thm-unavoidable} we use the rationality of the function only to show that it can be analytically extended to a small segment $\{1\}\times[0,\varepsilon]$ for some $\varepsilon >0$. If we start with an analytic function defined on $[0,1] \times [0,\pi]$, this step is not necessary and the same contradiction is obtained.

This shows that the classical starters $S_1,\ldots,S_8$, as well as $S_{CEMR}$, are not approximate zeros in the entire domain, as Figures~\ref{figMM1-e}, \ref{figS234}, \ref{figS567} and \ref{figS89} illustrate.
\end{remark}

\section*{Acknowledgements}
The authors would like to thank Prof. Antonio Elipe for his valuable help.

\end{document}